\newtheorem{thm}{Theorem}
\theoremstyle{definition}
\newcolumntype{C}[1]{>{\PreserveBackslash\centering}p{#1}}
\newcolumntype{R}[1]{>{\PreserveBackslash\raggedleft}p{#1}}
\newcolumntype{L}[1]{>{\PreserveBackslash\raggedright}p{#1}}
\theoremstyle{remark}
\def\hlinew#1{\noalign{\ifnum0=`}\fi\hrule \@height #1
\futurelet\reserved@a\@xhline}
\definecolor{greyf}{rgb}{0.7, 0.7, 0.7}
\definecolor{greys}{rgb}{0.85, 0.85, 0.85}
\definecolor{aogreen}{rgb}{0.0, 0.5, 0.0}
\definecolor{carnelian}{rgb}{0.7, 0.11, 0.11}
\newcolumntype{a}{>{\columncolor{Gray}}c}
\newcolumntype{b}{>{\columncolor{white}}c}
\def\revise#1{{\color{black}#1}}
\newcommand{\truep}{\textcolor{aogreen}{\faCheckSquareO}}
\newcommand{\falsep}{\textcolor{carnelian}{\faTimesCircleO}}
\newcommand{\falsen}{\textcolor{carnelian}{\faCircleO}}
\newcommand{\PreserveBackslash}[1]{\let\temp=\\#1\let\\=\temp}
\newcolumntype{C}[1]{>{\PreserveBackslash\centering}p{#1}}
\newcolumntype{R}[1]{>{\PreserveBackslash\raggedleft}p{#1}}
\newcolumntype{L}[1]{>{\PreserveBackslash\raggedright}p{#1}}
\begin{document}

\date{}

\title{TAPFixer: Automatic Detection and Repair of Home Automation Vulnerabilities based on Negated-property Reasoning\thanks{This paper has been accepted in USENIX Security 2024. This is its extended version.}}
\author{
{\rm
  Yinbo Yu$^{1,2}$,
  Yuanqi Xu$^{1}$,
  Kepu Huang$^{1}$,
  Jiajia Liu$^{1}$
} \\
{
  $^{1}$\textit{National Engineering Laboratory for Integrated Aero-Space-Ground-Ocean Big Data Application Technology, School of Cybersecurity, Northwestern Polytechnical University, China}
}\\
{
  $^{2}$\textit{Research \& Development Institute of Northwestern Polytechnical University in Shenzhen, China}
}\\
{
\textit{Email: \{yinboyu, liujiajia\}@nwpu.edu.cn, \{yuanqixu,huangkepu\}@mail.nwpu.edu.cn}
}
} 
\maketitle

\begin{abstract}
Trigger-Action Programming (TAP) is a popular end-user programming framework in the home automation (HA) system, which eases users to customize home automation and control devices as expected. However, its simplified syntax also introduces new safety threats to HA systems through vulnerable \revise{rule} interactions. Accurately fixing these vulnerabilities by logically and physically eliminating their root causes is essential before rules are deployed. However, it has not been well studied. In this paper, we present TAPFixer, a novel framework to automatically detect and repair rule interaction vulnerabilities in HA systems. It extracts TAP rules from HA profiles, translates them into an automaton model with physical and latency features, and performs model checking with various \revise{correctness properties. It then uses a} novel negated-property reasoning algorithm to automatically infer a patch via model abstraction and refinement and model checking based on negated-properties. We evaluate TAPFixer on market HA apps (1177 TAP rules and 53 properties) and find that it can achieve an 86.65\% success rate in repairing rule interaction vulnerabilities. We additionally recruit 23 HA users to conduct a user study that demonstrates the usefulness of TAPFixer for vulnerability repair in practical HA scenarios.

\end{abstract}

\vspace{-2mm}
\section{Introduction}

A recent spurt of progress in advanced technology (e.g., artificial intelligence, 5G, and cloud computing) has incredibly improved the automation and intelligence of the Internet of Things (IoT). To better provide automatic service for end users, an IoT programming framework named \textit{Trigger-Action Programming} (TAP) has been widely applied in many home automation (HA) platforms, e.g., Samsung SmartThings \cite{smartthings}, Apple Homekit \cite{homekit}, IFTTT \cite{ifttt}, Home Assistant \cite{homeassistant}, Mi Home\cite{mi}, and so on. In general, a TAP rule is defined in the form of ``\textbf{IF} the \textit{trigger} occurs \textbf{WHILE} the \textit{condition} is met, \textbf{THEN} perform the \textit{action}.'' It describes how to operate a device (\textit{action}) under the state constraint (\textit{condition}) when an event or a state change (\textit{trigger}) occurs in the HA system. For example, ``\textbf{IF} the user is present \textbf{WHILE} the \ce{CO2} is above a predefined value, \textbf{THEN} open the window for 10 min''. TAP rules have greatly facilitated and enriched users' lives.

TAP rules make it easier to connect various devices for collaborative automation. However, the surge in interactions between rules, especially those involving the complicated physical space (\textit{i.e.}, the HA physical environment), challenges the correctness of TAP-based HA systems \cite{birnbach2019peeves,chi2020cross,ding2021iotsafe}.
For example, turning the heater on can increase the temperature, which can then activate or enable other TAP rules.
End users have little understanding of the incomprehensible relationship between the logical (\textit{i.e.}, smart device behaviors defined by programs) and physical space. Hence, it is hard for users to configure TAP rules that align with their intentions \cite{zhang2019autotap}, as well as manually identify and fix rule interaction vulnerabilities, resulting in \revise{unexpected device status (e.g., the AC and window are both turn on) or even safety risks (e.g., the door is unlocked when the user is not at home)} in the HA system.

Recently, a significant number of advanced techniques have been proposed to secure TAP-based HA systems, including automation generation \cite{manandhar2020towards, zhang2019autotap}, threat detection \cite{alhanahnah2020scalable, yu2022tapinspector}, access control \cite{celik2019iotguard, schuster2018situational}, privacy leakage analysis\cite{bastys2018if, hsu2019safechain}, and anomaly detection\cite{Sikder_Aksu_Uluagac_2017, birnbach2019peeves, Fu_Zeng_Du_2021}.
While these techniques make great contributions to analyzing rule interaction vulnerabilities, there is a noticeable lack of attention to resolving and preventing them.
Some dynamic control-based methods \cite{jia2017contexlot, tian2017smartauth,ding2021iotsafe,Chi_Zeng_Du} are proposed to control rule enforcement at runtime to avoid risks according to specified safety policies. However, they do not eliminate the root cause of vulnerabilities \revise{(\textit{i.e.}, rule semantic flaws)} and can introduce additional running overhead.
In contrast, static-based methods \cite{liang2016systematically,bu2018systematically, zhang2019autotap} can generate rule patches to \revise{correct rules}. Unfortunately, these methods neglect dynamic factors (e.g., latency and physical interactions) that may change the practical effect of rule executions, and therefore have limited repair capabilities \cite{birnbach2019peeves, chi2020cross}.

To address these limitations, in this paper, we focus on vulnerability static repairing and design TAPFixer, an automatic vulnerability detection and repair framework for securing TAP-based home automation.
To our best knowledge, TAPFixer is the first work that can essentially detect and fix rule interaction vulnerabilities both in the logical and physical space.
It is orthogonal to dynamic control-based methods in that it can statically reduce risks as possible before rule execution, allowing the latter to run with a lower running cost (\textit{i.e.,} few enforcement policies) to prevent risks caused by unpredictable events (e.g., human interference with devices).

Our major contributions are summarized as follows:

\begin{itemize}[leftmargin=2mm, itemindent=1mm, itemsep=2pt,topsep=0pt,parsep=0pt]
\item We design a formal model of rule interactions using finite automaton, which formalizes and embeds physical operating features into rule syntax to enable accurate static vulnerability detection. With such a model and a set of designed correctness properties, TAPFixer detects rule interaction vulnerabilities through model checking.
\item We design a novel negated-property reasoning algorithm that can automatically construct rule patches to fix and radically eliminate vulnerabilities both in the logical and physical space. Given a violation of a property (\textit{i.e.}, a counterexample), its core idea is to use negated properties to reason about negated counterexamples through an abstraction and refinement process, thereby identifying possible repair patches of the violation.

\item We conduct a benchmark to evaluate the accuracy of TAPFixer in comparison to existing approaches. We then apply TAPFixer to market HA apps, where TAPFixer can obtain an 86.65\% success rate in repairing found vulnerabilities on average. We also conduct a user study and performance analysis of TAPFixer to demonstrate how well it detects and repairs rule interaction vulnerabilities.
\end{itemize}

\vspace{-2mm}
\section{Preliminary}
\subsection{TAP Rule Formulation}
In this section, we formulate TAP rules and their interactions based on existing advanced research \cite{chi2022delay,ozmen2022discovering}.
The execution of a TAP rule follows the logical sequence of trigger-condition-action: $r:=\langle t,c\rangle\mapsto a$, where $t$ is a \textit{trigger}, $c$ is a \textit{condition}, $a$ is an \textit{action}, and $\mapsto$ denotes a relationship of the sequential execution. $r$ can be formulated as a set of constraints and assignments on \textbf{entity attributes} $\mathbbm{A}$ which are the intuitive abstract state expression of entities including logical states (e.g., time) and physical objects (including smart devices and physical attributes, e.g., humidity). To achieve a more accurate formal analysis of rule executions, we classify entity attributes as \textbf{\emph{Immediate}} and \textbf{\emph{Tardy}} types \cite{yu2022tapinspector}. While the former (e.g., the state of a switcher) changes instantaneously, the latter (e.g., temperature) takes a period of time to make changes.

\textbf{Rule Syntax}: for $r:=\langle t,c\rangle\mapsto a$, it can be defined as (1) $t$ specifies a constraint on a certain attribute to activate $r$, e.g., humidity $\textless$ 30$\%$; 2) $c$ is a set of constraints on one or more attributes, and all of them must be evaluated to be true for action executions; 3) $a$ represents one or more assignments on attributes. Actions can be divided into two types: \textbf{\emph{Immediate}} and \textbf{\emph{Extended}}. While the former (e.g., turning lights on) completes instantaneously, the latter (e.g., lowering the temperature to 20\textcelsius) needs a period of time to complete.

\textbf{Rule Semantic}: it is the abstract template describing characteristics of entities in TAP rules (e.g., working modes of the security camera), which can be formalized as follows:
\vspace{-2mm}
\begin{equation}
r_S:= t_S\times c_S \times a_S,
\end{equation}
where $t_S$, $c_S$, and $a_S$ denote the semantics of trigger, condition, and action, respectively.

\textbf{Rule Configuration}: it is the abstract representation of attribute values used in TAP rules (e.g., night mode of the security camera), which can be formalized as follows:
\vspace{-2mm}
\begin{equation}
r_C:= t_C\times c_C \times a_C,
\end{equation}
where $t_C$, $c_C$, and $a_C$ denote the configuration of trigger, condition, and action, respectively.

\vspace{-2mm}
\subsection{Rule Interaction Vulnerabilities}\label{vulner}
\revise{Using simplified rule syntax to automate complex HA environments is hard to avoid without errors, even for professional users \cite{zhang2019autotap}.
The rule vulnerabilities may be caused by several reasons \cite{birnbach2019peeves, chi2020cross, wang2019charting}, e.g., user misconfiguration, misleading or deceptive rule apps, or state deception via device spoofing or channel injection. The fundamental root cause is logical flaws in rule semantics. These flaws are compounded during rule interactions among rules through triggers, conditions, or actions in logical or physical channels, which can lead to more unexpected or even vulnerable states.
Existing works \cite{chi2020cross,yu2022tapinspector,chi2022delay} have summarized various rule interaction vulnerabilities. In this paper, depending on how rules interfere, we categorize existing vulnerabilities into 3 basic patterns and further define 5 expanded vulnerability patterns based on specific interference contexts in each basic pattern. We give a concise description of these 8 patterns in follows and their details are described in Appendix \ref{asec:rivp} due to limited space.}

\textbf{Basic Vulnerability Pattern (V1-V3)}. In general, based on how triggers, conditions, and actions of a rule interfere with others, the basic vulnerability pattern is classified into three categories: Trigger-Interference \textbf{V1}, Condition-Interference \textbf{V2}, and Action-Interference \textbf{V3} vulnerabilities\cite{chi2020cross}. In \textbf{V1}, an action may produce an event that unexpectedly interferes with triggers of other rules and changes the rule context defectively, e.g., opening the light triggers the rule (``if the brightness is high, turn off the light''), leading to light strobe.
In \textbf{V2}, an action may change the condition satisfaction of other rules and put the rule execution at risk, e.g., the user comes home late and interferes with the safety rule conditioned as sleep mode.
In \textbf{V3}, rules with the same or different trigger may send conflicting commands to the same device, e.g., the door opens the moment it locks, which may cause a break-in.

\textbf{Expanded Vulnerability Pattern (V4-V8)}.
With in-depth research on vulnerability detection \cite{chi2022delay,yu2022tapinspector}, the basic vulnerability pattern is expanded by new vulnerabilities in more specific interference contexts. Expanded contexts mainly depict physical-related and latency-related features. The expanded pattern can be classified into 5 categories. In Expanded Trigger-Interference vulnerabilities \textbf{(V4)}, an action may also interfere with triggers of other rules through channel-based interaction describing physical channel shared by actuators and sensors (e.g., both the thermostat and the temperature sensor work on temperature).  \revise{Fig. \ref{fig:rin}(a) shows an example of V4.}
Expanded Action-Interference vulnerabilities \textbf{(V5-V7)} \cite{yu2022tapinspector} includes \emph{Disordered action
scheduling} \textbf{(V5)}, \emph{Action overriding} \textbf{(V6)}, and \emph{Action breaking} \textbf{(V7)}, which describes that an action may override or interrupt actions of other rules due to different time delays.  \revise{Fig. \ref{fig:rin}(b) shows an example of V7.} Expanded Condition-Interference vulnerabilities \textbf{(V8)} mainly occur when rules work on the same channel attribute but have different preferences.

\vspace{-2mm}
\section{Overview}

\vspace{-2mm}
\subsection{Motivation and Threat Model}

\textbf{Motivation}: To enable easy participation in home automation, TAP rules are designed in an accessible form so that anyone without programming experience can easily get started.
However, this ease of use eliminates the complex modeling of logical and physical spaces, leaving users prone to configure rules incorrectly.
As a result, vulnerabilities in rule interactions are common during rule setting and may lead to unintended safety issues.
There are two types of methods to address this issue: dynamic rule enforcement control and static rule syntax correction.
However, considering complexities like latencies and physical interactions, these methods may not always prevent or rectify vulnerabilities effectively as expected.

\begin{figure}
\begin{center}
\setlength{\abovecaptionskip}{0cm}
\setlength{\belowcaptionskip}{-0.7cm}
\begin{minipage}{1\columnwidth}
  \centering
    \subfloat[Latency]{\includegraphics[width=0.49\columnwidth]{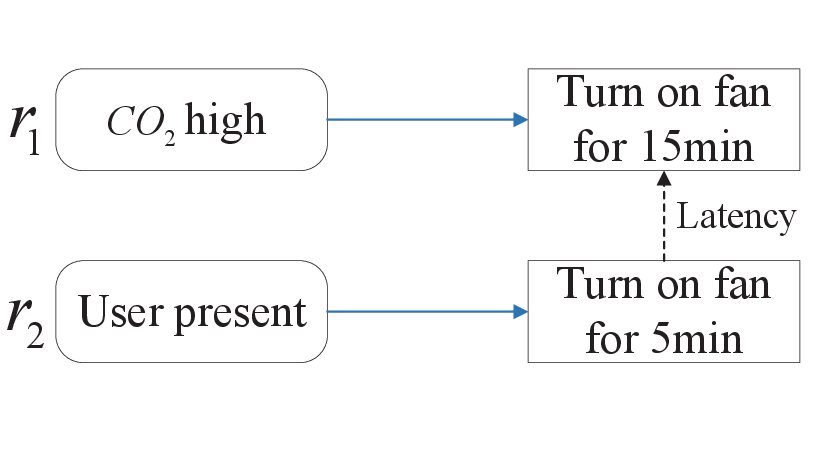}}
    \subfloat[Physical interaction]{\includegraphics[width=0.49\columnwidth]{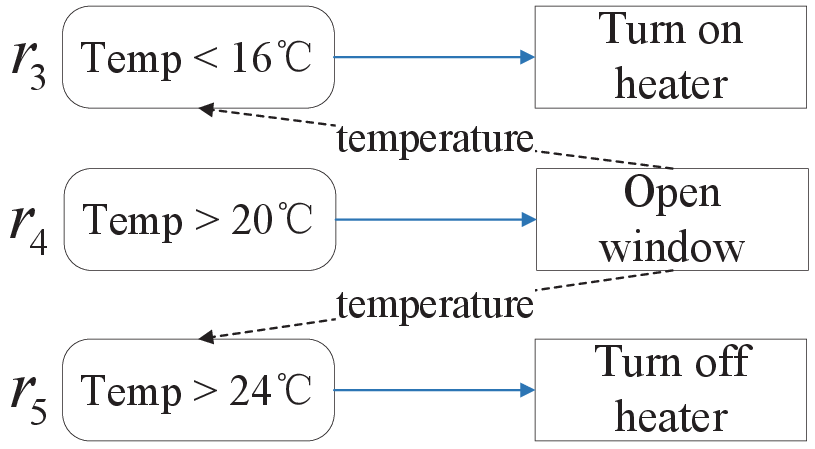}}
\end{minipage}
\vspace{1mm}
\caption{Motivation examples of rule interactions.}
\vspace{-2mm}
\label{fig:rin}
\end{center}
\end{figure}

\begin{figure*}[htbp]
\centering
\setlength{\abovecaptionskip}{0cm}
\setlength{\belowcaptionskip}{-0.4cm}
\includegraphics[width=1\textwidth]{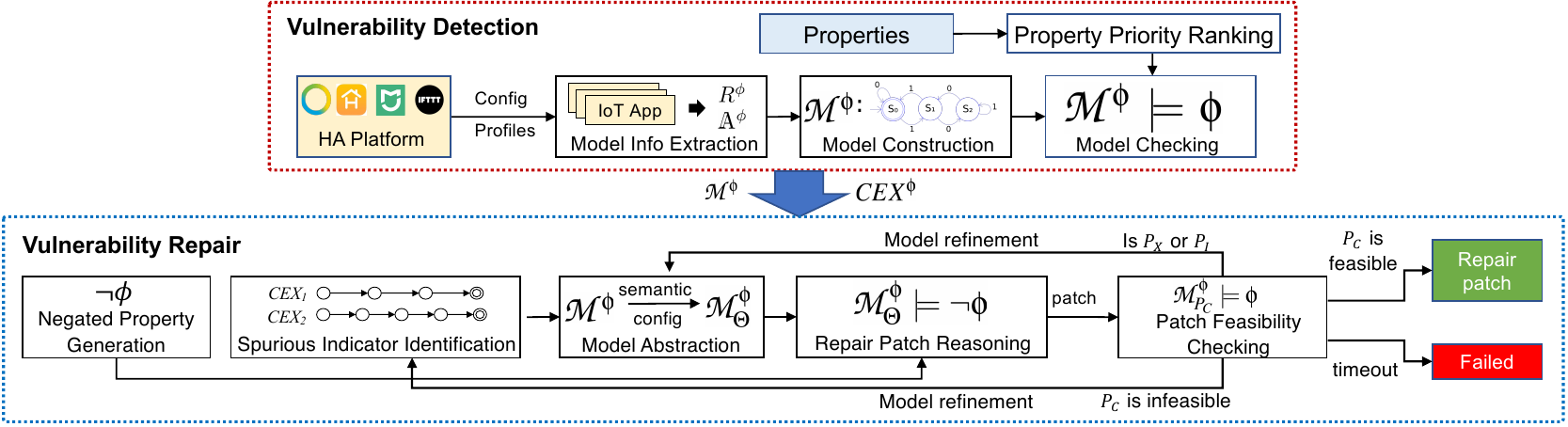}
\caption{Overview of TAPFixer.}
\label{fig:system}
\vspace{-2mm}
\end{figure*}

We show two examples as our motivation: In Fig. \ref{fig:rin}(a): a violation occurs between $r_1$ and $r_2$: if the indoor \ce{CO2} level rises up ($>$1000 ppm) within 5 minutes after the user returns home, $r_2$ will complete first and thus interrupt the ventilating fan in $r_1$, causing \ce{CO2} remaining at a high level; In Fig. \ref{fig:rin}(b): due to different indoor and outdoor temperatures, opening the window can cause the indoor temperature to drop below 16\textcelsius\ and fail to heat up to the desired 24\textcelsius. This physical interaction can cause $r_3$ to be accidentally triggered or $r_5$ to be blocked, which may lead to a fire hazard due to the heater being on the time while no one is at home. Dynamic methods tend to prevent these defects using access control (e.g., invalidating runtime effects of the fan shutdown in $r_2$ and predicting temperature changes to prevent unexpected enforcement of $r_3$ and $r_5$ in advance). They do not eliminate the root causes of defects and will face the running overhead of dynamic monitoring or prediction. In contrast, static methods fix these defects by modifying or patching rules. However, existing methods suffer from poor repair quality. For example, they can fix the violation in Fig. \ref{fig:rin}(a) by adjusting the delay or turning the fan back on after $r_1$ is interrupted, but $r_1$ is still interrupted. The major reason is that they ignore the root cause of vulnerabilities both in logical and physical spaces.

\textbf{Threat model}: We assume that rule interaction vulnerabilities are induced in three aspects:
\revise{1) \textit{misconfiguration by users}: the lack of safety practices for HA users can make it challenging to ensure global safety of installed rules, particularly when multiple family members share the HA system; 2) \textit{misleading apps}: ambiguous or incorrect app descriptions may cause users to misinterpret app functionality, leading to rule conflicts. More seriously, attackers can trick users in this way to install malicious apps they published, thereby introducing specific rules to trigger interactive threats  \cite{Chi_Zeng_Du}; 3) \textit{side-channel attacks}: an attacker can infer user and device activities by sniffing network traffic \cite{wan2022iotmosaic} and physical changes \cite{birnbach2019peeves} to exploit specific data transmission delay or physical interference to launch unsafe rule interactions. Our TAPFixer mainly focuses on the first two aspects, but also supports checking the third threat by introducing arbitrary rule delay and nondeterminacy into rule models (see $\S$\ref{subsec:modeling}).}

\vspace{-3mm}
\subsection{TAPFixer Design Intuition}
To secure TAP-based HA systems, we present TAPFixer, an automatic framework to detect and repair rule interaction vulnerabilities, as shown in Fig. \ref{fig:system}. TAPFixer takes HA apps (e.g., SmartApp, IFTTT applets) and corresponding configurations installed in the HA platform (\textit{i.e., profiles}) as input, verifies their \revise{correctness} via model checking \revise{with a set of predefined correctness properties, and generates repair patches for identified vulnerabilities. In the detection phase (see $\S$\ref{sec:verif}),}
given a correctness property $\phi$, TAPFixer extracts modeling information from profiles, formally models rule interactions ($\mathcal{M}^{\phi}$), and performs model checking ($\mathcal{M}^{\phi} \models \phi$) to identify the potential counterexample $CEX^{\phi}$ (\textit{i.e.}, a rule interaction vulnerability). $CEX^{\phi}$ is an execution path that can reach the incorrect state space \revise{against $\phi$ (\textit{i.e.}, $\neg\phi$-space shown in Fig. \ref{fig:insight}). Our detection component is designed based on many existing advanced effects \cite{chi2020cross, alhanahnah2020scalable, yu2022tapinspector, ding2021iotsafe}, but it can model and analyze TAP rules with more logical and physical features (see Table \ref{tab:detect-work} and \ref{tab:compare}) to achieve accurate vulnerability detection.}

\begin{figure}[t!]
    \centering
    \includegraphics[width=1\columnwidth]{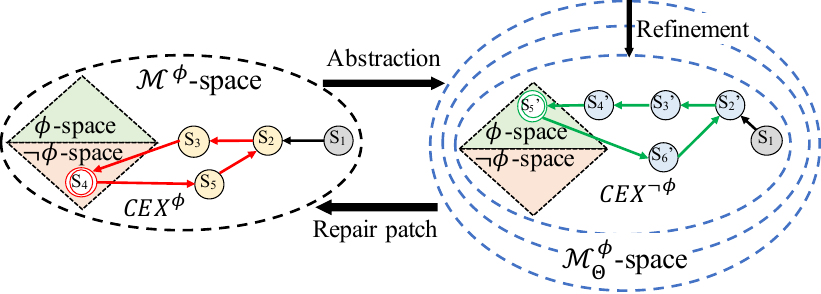}
    \caption{The core idea of our \revise{NPR} algorithm for vulnerability repair. Model abstraction via interpolation is used to involve a larger state space for patch searching, so the negated counterexample $CEX^{\neg\phi}$ can contain repair patches for the vulnerability $CEX^{\phi}$.}
    \label{fig:insight}
    \vspace{-5mm}
\end{figure}

To repair $CEX^{\phi}$, we design a novel \textit{negated-property reasoning} (NPR) algorithm in TAPFixer to generate rule patches. \revise{Its core idea (see Fig. \ref{fig:insight})} is to perform model checking with the negated property $\neg\phi$ (\textit{i.e.}, $\mathcal{M}^{\phi}\models\neg\phi$) to generate possible repair patches.
However, it is challenging since (1) $\mathcal{M}^{\phi}$ is flawed and includes much invalid information; (2) \revise{$\mathcal{M}^{\phi}$'s state space may not contain} effective repair clues.
To this end, we design an adaptive model abstraction and refinement framework (similar to the CEGAR algorithm \cite{clarke2000counterexample}) in NPR to derive the accurate patch. \revise{Its basic workflow is to abstract $\mathcal{M}^{\phi}$ to compute an over-approximated model $\mathcal{M}^{\phi}_{\Theta}$,  perform model checking with the \emph{negated property} $\neg\phi$ to generate negated counterexamples $CEX^{\neg\phi}$, refine $\mathcal{M}_{\Theta}^{\phi}$ both on rule semantics and configurations to eliminate infeasible $CEX^{\neg\phi}$, and construct patches from feasible $CEX^{\neg\phi}$ and $\mathcal{M}_{\Theta}^{\phi}$ (see $\S$\ref{sec:npr})}.

Our key insight of using model abstraction and refinement for patch generation is that abstraction can extend a larger model state space to introduce more possible repair clues and refinement guided by infeasible negated counterexamples can eliminate impossible and unrelated state transitions. An $CEX^{\neg\phi}$ identified from $\mathcal{M}^{\phi}_{\Theta}$ can enter the $\phi$-space which provides a possible path to patch the original model $\mathcal{M}^{\phi}$ to prevent it from entering into the $\neg\phi$-space. 
Taking $\mathcal{M}^{\phi}$ in Fig. \ref{fig:rin}(b) with $\phi$ (``if no one is at home, the heater should be turned off'') for example, by performing NPR with $\neg\phi$ (``if no one is at home, the heater should be turned on''), we can obtain an interpolated model $\mathcal{M}_{\Theta}^{\phi}$ and $CEX^{\neg\phi}$ which include a new execution path (``turn the heater off when the user leaves'') and a new condition constraint (``user is at home'') on $r_3$. This path and constraint in $\mathcal{M}_{\Theta}^{\phi}$ make it violate $\neg\phi$, but can construct as the repair patch of $CEX^{\phi}$.

\vspace{-4mm}
\section{Rule Modeling and Vulnerability Detection}
\label{sec:verif}

\vspace{-2mm}
\subsection{Modeling Information Extraction}
\label{subsec:extract}

To formally model TAP rules, TAPFixer first extracts device information from device platform websites (e.g., SmartThings \cite{capabilities}), transforms different implementations of automation apps into the format of TAP rules $R$, and identifies the rule set $R^{\phi}$ and attribute set $\mathbbm{A}^{\phi}$ associated with a given property $\phi$.
There are two mainstream implementation methods of the TAP paradigm: general-purpose programming languages (e.g., Groovy in SmartThings apps) and natural languages (e.g., IFTTT applets). Following existing literature \cite{chi2020cross,alhanahnah2020scalable,wang2019charting}, TAPFixer uses static program analysis and natural language processing (NLP) techniques to transform these apps into a set of TAP rules $R$. Details are described in Appendix \ref{asec:mie}.

Given a set of properties, TAPFixer filters out relative entities from the extracted rule set to compact the model through possible logical or physical interactions in a recursive way. For logical interactions, it first extracts the set of entity attributes $\mathbbm{A}^{\phi}$ presented in the property $\phi$, extracts rules $R^{\phi}$ containing attributes in $\mathbbm{A}^{\phi}$ from the rule set $R$, and appends new attributes in $R^{\phi}$ to $\mathbbm{A}^{\phi}$ accordingly. For physical interactions, \revise{we consider 9 frequently used or safety-sensitive \textit{physical channels}, mainly referring to \cite{ozmen2022discovering}}: temperature, illuminance, motion, smoke, humidity, CO, \ce{CO2}, sound, and weather status. \revise{Since TAPFixer is performed in a static manner, it cannot predict interaction behaviors in these channels dynamically. Inspired by \cite{yu2022tapinspector}, we manually conduct a qualitative analysis of device effects on physical channels in a few fixed home environments and quantify these effects in Appendix Table \ref{tab:pc} based on existing results \cite{birnbach2019peeves, yu2022tapinspector, ding2021iotsafe} and our collected sensor data. Note that the accuracy of this manual analysis may reduce as the home environment changes. Hence, to control model errors, we set these quantitative effects to a value range rather than a single value according to our measurement and also model the impact of non-deterministic interferences to a certain extent (see $\S$\ref{subsec:modeling})}.
With these quantified physical interaction effects, TAPFixer further appends both $R^{\phi}$ and $\mathbbm{A}^{\phi}$ in the same way.
This process is executed recursively until $R$ is empty or no elements can be appended to $\mathbbm{A}^{\phi}$.

\vspace{-2mm}

\subsection{Rule Interaction Modeling}
\label{subsec:modeling}

Utilizing extracted $R^{\phi}$ and $\mathbbm{A}^{\phi}$, TAPFixer models rule interactions as a finite automaton (FA). To enhance the accuracy and integrity of modeling, we introduce practical dynamic features to better align with the real world as follows:

\textbf{Latency Modeling}: we involve latency as a key element of rule syntax since a rule may be not executed immediately due to different delays \cite{chi2022delay}: (1) the delay $l_1$ defined in rules for specifying the execution time (e.g., turn lights on for 10min) or postponing action execution (e.g., the \emph{runIn} function in SmartThings). The key to modeling $l_1$ is to determine when the delay is completed. TAPFixer both considers explicit and implicit $l_1$: the former is modeled as a timer variable configured with a specified timeout value (e.g., for ``turning on the fan for 5min'', the timer has a timeout value of 300); the latter has no explicit timer, but a Boolean variable $wait\_trigger$ to denote if the targeted status is present after an extended action starts; (2) the delay $l_2$ on a \emph{tardy attribute} change to a certain value, e.g., time spent on the temperature to 20\textcelsius. It is the key factor affecting physical interactions. We discuss it in the next aspect; (3) the platform delay for data updating from devices to the platform \cite{mi2017empirical}. It is caused by updating intervals of sensors, within which a physical entity attribute $\mathbbm{A}_{phy}$ may be inconsistent with its mirror logical variable of $\mathbbm{A}_{log}$ in the platform. We observe that the updating interval varies from a few seconds (e.g., present sensors) to a dozen minutes (e.g., temperature sensors). TAPFixer sets updating interval variables for $l_3$ of different sensors to model hysteretic updates.

Given the above latency, we further formalize $r$ as follows:
\vspace{-2mm}
\begin{equation}
r_{immd} := (t, c) \stackrel{l_1}{\longmapsto} a_{immd},
\end{equation}
\begin{equation}
r_{ext} := (t, c) \stackrel{l_1}{\longmapsto} a_{ext} \stackrel{l_2}{\longmapsto} a_\Box,
\end{equation}
\begin{equation}
\mathbbm{A}_{phy} \stackrel{l_3}{\longmapsto} \mathbbm{A}_{log},
\end{equation}

\noindent where $r_{immd}$ is an immediate rule, $r_{ext}$ is an extended rule, $a_{immd}$ is an immediate action, $a_{ext}$ is an extended action, $a_\Box$ is the completion of $a_{ext}$.

\textbf{Physical Interaction Modeling}: Device actions can interact with environmental attributes via \textit{physical channels}. This physical interaction extends the way rules interact. We model physical channels based on the physical effects of immediate and tardy attributes and the three following realistic dynamic physical features (listed in Appendix Table \ref{tab:pc}):

(1) \textit{Implicit physical effect}: in addition to explicit effects of device actions, there are also implicit physical effects. For instance, opening a window can explicitly accelerate air circulation and implicitly affect the indoor temperature. We map device capabilities to physical channels they implicitly affect for rule modeling;

(2) \textit{Joint physical effect}:
devices in the same room or sharing the same physical channel may cause joint physical effects at proper distances \cite{ding2021iotsafe}. In this case, the actual effect of joint operations $\land_{i=1}^n a^{(i)}$ is the combination of individual effects. Hence, for each physical channel, we filter out devices affecting it, enumerate their co-execution scenarios, and model the sum of attribute value changes as the joint physical effect.

(3) \textit{Nondeterminacy}: there are non-deterministic physical characteristics that may vary from the case, e.g., the fire intensity, and differences in device execution due to the battery drain. Owing to this dynamic variability, TAPFixer encodes such factors as a range of randomly selected values rather than specific values, e.g., the timing threshold for evaluating the water valve to eliminate fires is set in a random way.

Finally, we formulate the \textit{physical interaction} as follows:
\vspace{-2mm}
\begin{equation}
\land_{i=1}^n a^{(i)} \stackrel{l_2}{\hookrightarrow} \mathbbm{A},
\end{equation}
where $\hookrightarrow$ denotes a physical interaction, $\land$ denotes a joint physical relation, and $l_2$ is the physical effect delay. If $\mathbbm{A}$ is an immediate channel attribute, $l_2$ is equal to 0; otherwise, we set $l_2$ according to Appendix Table \ref{tab:pc} for tardy attributes.

\textbf{Model Construction}: Given $R^{\phi}$, $\mathbbm{A}^{\phi}$, and these above models, TAPFixer translates $\mathbbm{A}^{\phi}$ as automaton variables and constructs the FA $\mathcal{M}^{\phi} := \{S, I, \Sigma \}$, where $S$ is a finite set of all states, $I\subseteq S$ is the initial state set, and $\Sigma\subseteq S \times S$ is a set of state-to-state transitions. A state $s_i\in S$ represents as a set of values for all automaton variables. Taking $r_2$ in Fig. \ref{fig:rin}(a) for example, $S$ is permutations between \{user.present, user.not\_present\} and \{fan.on, fan.off\}, and $I$ is a state \revise{arbitrarily} selected from it.

TAPFixer formulates rule executions, environmental changes, and physical interactions as $\Sigma$.
For a rule $r_k:=\langle t_k,c_k\rangle\mapsto a_k$ and the environmental attribute set $\mathbbm{A}$, a transition function $\varphi(s_{i},s_{j})\in \Sigma$ is defined as follows:
\vspace{-2mm}
\begin{equation}
\label{eq6}
\varphi(s_{i},s_{j})=\left\{
\begin{aligned}
s_{i}\times (s_{j} \hookleftarrow a_k(s_i)) & , &(t_k(s_i) \wedge c_k(s_i)) \vee \mathbbm{A}(s_i) \\
s_{i}\times s_{i} & , &otherwise.
\end{aligned}
\right.
\end{equation}
where $s_{i}$ and $s_{j}$ are the current and next automaton state, respectively, $\hookleftarrow$ means a dependency relationship caused by both logical and physical interactions, $t_k(s_i)$, $c_k(s_i)$, $a_k(s_i)$, and $\mathbbm{A}(s_i)$ are the predicate for $t_k$, $c_k$, $a_k$, and $\mathbbm{A}$ in the state $s_{i}$, respectively. $\mathbbm{A}(s_i)$ is the prerequisite that represents these environmental attributes that can naturally change (e.g., rainy conditions). The state $s_i$ will transfer to $s_j$ by action executions if both $t_k(s_i)$ and $c_k(s_i)$ are satisfied or $\mathbbm{A}(s_i)$ is true; otherwise, it remains unchanged. For instance, $\Sigma$ is the transition between the state \{user.not\_present, fan.off\} and \{user.present, fan.on\} that corresponds to $t_2$, $c_2$, $a_2$ in $r_2$.

The rule model may involve a large number of variables, some of which have large ranges (e.g., indoor illuminance can be 0 to 2000 lux), resulting in a large state space for model verification. \revise{We note that given a set of rules, an attribute’s values are activated on a limited range. Values outside this range will not affect rules' activities. Hence, to avoid state explosion, we collect all configurations defined in rules and globally compress the range of all model variables to an optimized smaller range. For example, the concerned value set of temperature in Fig. \ref{fig:rin}(b) is $\{16,20,24\}$, so we can have a smaller range $[15, 25]$ rather than the full value range of temperature. Besides, TAPFixer now has not considered floating-point numbers and it performs rounding operations on floating-point values before value compression. This is because we note that HA users are insensitive to them and generally customize rules with low-precision values, and most home sensors typically only provide coarse-grained measurement values.}

\vspace{-3mm}
\subsection{Property Specification and Prioritization}
\label{subsec:sppr}

\revise{TAPFixer uses correctness properties for vulnerability detection. A property is a criterion to describe what automation behavior is safe or not.
Generally, it can be expressed in linear temporal logic (LTL) \cite{Pnueli_1977} which describes the relative or absolute order of behaviors in the system (e.g., the next state denoted by \textbf{X}, the subsequent path denoted by \textbf{F}, and the entire path denoted by \textbf{G}). Based on properties previously defined in \cite{alhanahnah2020scalable, Celik_McDaniel_Tan_2018, chi2020cross, hsu2019safechain,wang2019charting,yu2022tapinspector,zhang2019autotap}, we select and refine them according to safety-sensitive and commonly used devices and supply more properties for different scenarios (e.g., properties with permitted latencies). We finally conduct 53 properties for vulnerability detection as shown in Appendix Table \ref{tab:sp}.}

\revise{Additionally, TAPFixer also allows users to specify their desired correctness properties.}
To facilitate users' expression of properties, there are various natural language templates \cite{zhang2019autotap}, e.g., $[state]$ should $[always]$ be active, $[event]$ should $[never]$ happen if $[state_1, ..., state_n]$. However, from the perspective of LTL formula, different templates may be equivalent in logic. Hence, we conduct four logical equivalence relations to group 9 types of language templates into 2 types of logical templates for property specification: Event-based and State-based properties, as shown in Table \ref{tab:tspnp}. While the former focuses on identifying and handling exceptions timely, the latter focuses on continuously preventing exceptions from occurring, which are often combined to ensure safety.
Based on generalized templates, we divide properties into \emph{pre-proposition} and \emph{post-proposition} using the implication symbol $\Rightarrow$ to specify event- and state-based negated properties.
Besides, we introduce the variable recording the previous state based on the jump features of events to distinguish between event and state in the model.
We discuss these relations and language templates in Appendix \ref{asec:tesp} and Table \ref{tab:lqs} due to space limitation.

\renewcommand{\arraystretch}{1.3}
\begin{table*}[htb]
\setlength{\abovecaptionskip}{0cm}
\setlength{\belowcaptionskip}{-0.2cm}
\caption{Templates of correctness property and negated property.}
\label{tab:tspnp}
\vspace{-3mm}
\begin{center}
\resizebox{\textwidth}{!}
{
\begin{tabular}{|c|c|c|c|}
\hline
Property Type &Natural Language Template &Property LTL Template & Negated Property LTL Template\\ \hline
Event-based       & IF $[state_1, ..., state_n]$, $[event]$ should $[always]$ happen.        &   G($\land_{i=1}^n state_{i}\Rightarrow X(event)$)   &   G($\land_{i=1}^n state_{i}\Rightarrow \neg X(event)$)           \\ \hline
State-based    & WHEN $[state_1, ..., state_n]$, $[state]$ should $[always]$ be active.           &  G($\land_{i=1}^n state_{i}\Rightarrow state$) &   G($\land_{i=1}^n state_{i}\Rightarrow \neg state$) \\ \hline
\end{tabular}
}
\end{center}
\vspace{-8mm}
\end{table*}

\revise{Users can both pick up concerned properties in Table \ref{tab:sp} and also specify their properties using the above 2 templates.}
Given a set of correctness properties, they may conflict with each other if they share the same device capabilities.
So, it may be impossible to secure all properties for the same rule set. An example of conflicting properties is ``close windows when it rains'' and ``open windows when \ce{CO} is detected''. TAPFixer resolves property conflicts by priority ranking, which ensures that the property with a higher priority can be fixed first, even if a conflict exists.
We formulate two types of priorities (\emph{pre-proposition} and \emph{post-proposition}) based on the composition of properties: the former mainly describes the priority of environment entities, e.g., \ce{CO} has a higher priority than rainy weather; the latter mainly describes the priority of device capabilities, e.g., locking the door has a higher priority than opening it. For properties that share the same device capability, TAPFixer prioritizes them based on pre-proposition priority and further sorts them based on post-proposition priority if they have the same priority. Properties with the same final priority are assigned in a random order. We list all correctness properties we used in Appendix Table \ref{tab:sp}.

\vspace{-2mm}
\subsection{Vulnerability Detection}

Given an LTL property $\phi$ and a rule interaction model $\mathcal{M}^{\phi}$, TAPFixer detects rule interaction vulnerabilities via model checking $\mathcal{M}^{\phi}\models \phi$. If there is a vulnerability, the model checker will output a counterexample $CEX^{\phi}$, which is a sequence of automaton states in the FA as follows:
\vspace{-2mm}
\begin{equation}\label{equ:cex}
CEX^{\phi} := <(s_{1}, \varphi_{1}), ..., (s_{k}, \varphi_{k})>.
\end{equation}
\noindent A $CEX^{\phi}$ contains a state $s^{\nvDash}$ violating against $\phi$, \textit{i.e.}, $s^{\nvDash}$ reaches the $\neg\phi$-space (see Fig. \ref{fig:insight}, e.g., the window opening when it rains).
\revise{For example, Fig. \ref{fig:exa} illustrates $CEX^{\phi}$ of Fig. \ref{fig:rin}(a) and (b) against the property P.34 with the permitted time be 10 min and P.22, respectively: in Fig. \ref{fig:exa}(a), the proposition corresponding to the violating state $Fan=on$ against P.34 is false since the fan turns off without running for at least 10 min; in Fig. \ref{fig:exa}(b), $Presence=not\_present$ is satisfied but $Heater=off$ in the next state is not for P.22}.

\begin{figure}
\begin{center}
\setlength{\abovecaptionskip}{0.1cm}
\setlength{\belowcaptionskip}{0cm}
\begin{minipage}{1\columnwidth}
  \centering
    \subfloat[P.34 $:=$ G($\ce{CO2}.high \land r_1.timer \geq 300\Rightarrow Fan.on$)]{\includegraphics[width=1.0\columnwidth]{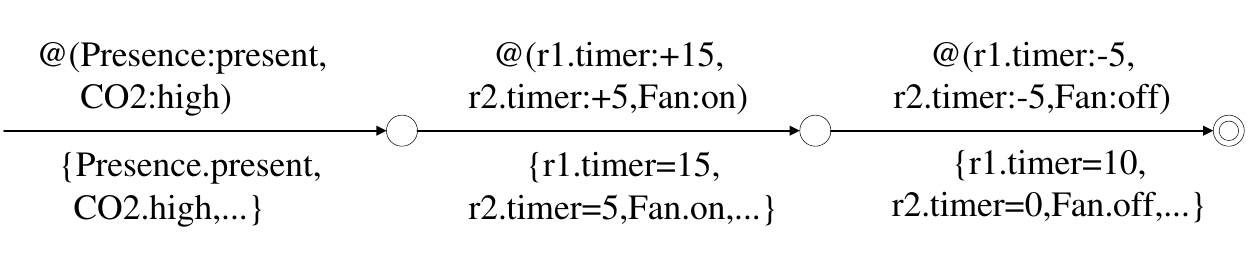}}
    \vspace{2mm}
\end{minipage}
\begin{minipage}{1\columnwidth}
  \centering
    \subfloat[P.22$:=$ G($Presence.not\_present\Rightarrow X(Heater.off)$))]{\includegraphics[width=1.0\columnwidth]{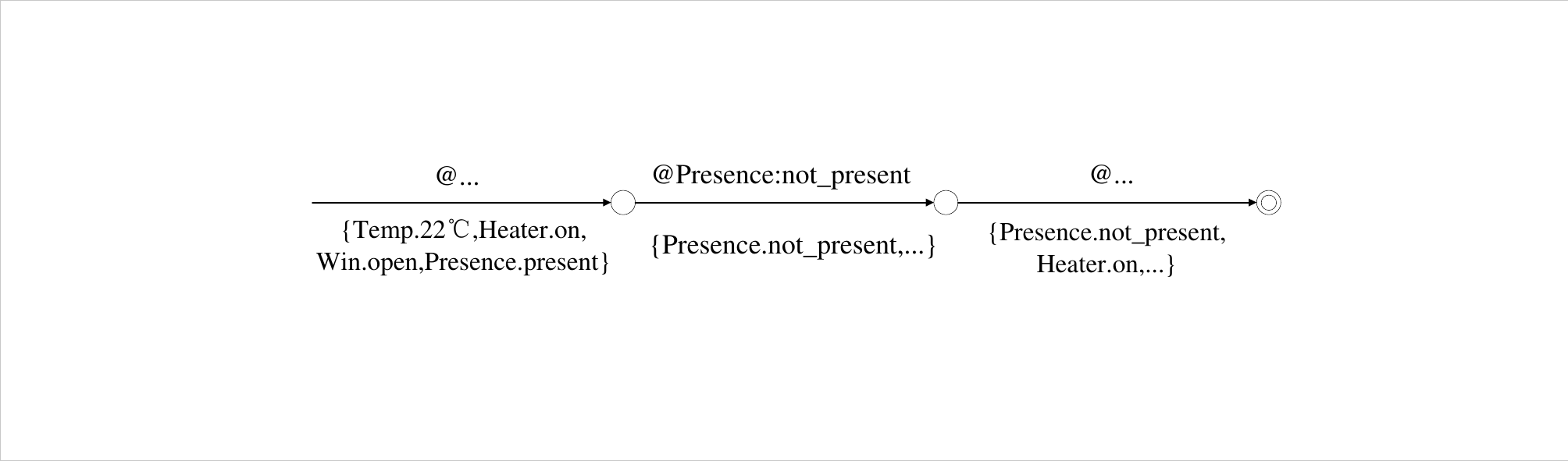}}
\end{minipage}

\caption{Example of counterexamples and their violating state. We use @ to denote the event and omit entity attributes that do not change or are irrelevant to the violation. The state with double circles denotes the violating state.}
\label{fig:exa}
\vspace{-9mm}
\end{center}
\end{figure}

\vspace{-2mm}
\section{Rule Interaction Vulnerability Repair}
\label{sec:npr}

To fix rule interaction vulnerabilities, we need a patch that can both fix logical errors and change violating physical effects. In this section, we design a negated-property reasoning (NPR) algorithm in TAPFixer for vulnerability repair. Once TAPFixer finds a counterexample $CEX^{\phi}$ for the property $\phi$ in the model $\mathcal{M}^{\phi}$, it generates the negated property $\neg\phi$ (see $\S$ \ref{subsec:ssr}) and performs NPR to generate a repair patch shown in Alg. \ref{alg:npr}. Given $CEX^{\phi}$, $\phi$, $\neg\phi$, $\mathbbm{A}^{\phi}$, $rnd=0$, and $\mathcal{M}^{\phi}$ as input, NPR performs a recursive process similar to counterexample-guided abstraction and refinement (CEGAR) to find a globally feasible repair patch $P$: it first analyses $CEX^{\phi}$ to obtain the spurious indicator $I_s$ (Line \ref{algl:ssi}, see $\S$\ref{subsec:ssr}), abstracts $\mathcal{M}^{\phi}$ as an over-approximated model  $\mathcal{M}_{\Theta}^{\phi}$ and reasons $P$ using $\mathcal{M}_{\Theta}^{\phi}\models\neg\phi$ (Line \ref{algl:abm}-\ref{algl:rps}, see $\S$\ref{subsec:regar}). Then, NPR performs local and global feasibility checking to validate the effectiveness of $P$ for fixing $CEX^{\phi}$ (Line \ref{algl:as}-\ref{algl:pnpr}, see $\S$\ref{subsec:pco}). If feasible, TAPFixer maps predicates in $P$ into corresponding rules in $R^{\phi}$ to fix $CEX^{\phi}$ (see $\S$\ref{subsec:rr}). \revise{Take Fig. \ref{fig:rin}(b) for example, NPR first extracts $I_s$ from $CEX^{\phi}$ in Fig. \ref{fig:exa}(b) (\textit{i.e.}, the violating state ``the heater is on when the user leaves'') as the indicator. Then, it collects entities in $\mathcal{M}^{\phi}$ and P.22, expands the state space of $\mathcal{M}^{\phi}$ by enlarging its predicate space with non-appearing but possibly-relative entities (e.g., the user's present status) to construct $\mathcal{M}_{\Theta}^{\phi}$. With $\mathcal{M}_{\Theta}^{\phi}$ and the negated P.22 \revise{(if the user is not at home, the heater should be turned on)}, NPR reasons a patch solution $P$ to $CEX^{\phi}$ which includes creating a new rule (``if the user is not present then turn off the heater'') and a condition (``the user is at home'') set into $r_3$.}

\setlength{\textfloatsep}{0pt}
\begin{algorithm}[!t]
\caption{Negated-property Reasoning (NPR)}
\linespread{0.9}\selectfont
\small
\label{alg:npr}
\KwIn{Counterexample $CEX^{\phi}$; Property $\phi$ and negated property $\neg\phi$; List of related entity attributes $\mathbbm{A}^{\phi}$; Number of recursive executions $rnd$; Model $\mathcal{M}^{\phi}$;}
\SetKwInOut{Output}{Output}
An empty patch $P$;\\

\If{$CEX^{\phi}$}{
    $I_s\leftarrow$ searchSpuriousIndicator($CEX^{\phi}$, $\phi$);\\\label{algl:ssi}
     $P_{I}, P_{X},iter \leftarrow CEX^{\phi}$, $CEX^{\phi},1$;\\
    \While{$iter$\,\textless\,$ITER\_LIMIT$}{
        $\mathcal{M}_{\Theta}^{\phi}, \leftarrow$\,abstractModel($P_I$,$P_X$,\:$\phi$,\:$\mathbbm{A}^{\phi}$,\:$\mathcal{M}^{\phi}$);\\\label{algl:abm}
        $P$ $\leftarrow$\,reasonPatch($\mathcal{M}_{\Theta}^{\phi}$, $\neg\phi$);\\\label{algl:rps}
        $P_C, P_I, P_X \leftarrow$checkLocalFeasibility($P$,\:$\phi$,\:$I_s$,\:$CEX^{\phi}$);\\\label{algl:as}
        \tcp{classify $P$ as $P_C$, $P_I$ or $P_X$}
        \If{$P_C$}{$CEX_{P_C}^{\phi},\!\mathcal{M}_{P_C}^{\phi}\!\leftarrow$\!verifyGlobalFeasibility(\!$\phi$,\,$\mathcal{M}^{\phi}$,\,$P_C$);\label{algl:vg}\\
            \If{$CEX_{P_C}^{\phi}$}{
                $rnd \leftarrow rnd+1$;\\\label{algl:rnd}
                \If{$rnd <\,$ROUND\_LIMIT}{
                $P\leftarrow$ NPR($CEX_{P_C}^{\phi}$, $\phi$, $\neg\phi$,$\mathbbm{A}^{\phi}$, $rnd$, $\mathcal{M}_{P_C}^{\phi}$);\label{algl:pnpr}
                }
            }\lElse{
            \textbf{Break}
             }
        }\lElse{$iter \leftarrow iter+1$}
    }
}

\Output{$P$}
\end{algorithm}
\setlength{\textfloatsep}{0pt}

\vspace{-3mm}
\subsection{Negated Property Generation and Spurious Indicator Identification}
\label{subsec:ssr}

The core method of NPR is to use the property $\neg\phi$ negated to the property $\phi$ to reason about the repair patch for the identified $CEX^{\phi}$. With our two merged LTL property templates, we give their negated templates in Table \ref{tab:tspnp} and generate $\neg\phi$. To prove the soundness of property negation, we manually prove these negations shown in Appendix \ref{asec:tesp}.

Besides, the key indicator of whether a patch can fix the vulnerability $CEX^{\phi}$ is the ability to eliminate the violating state $s^{\nvDash}$ in $CEX^{\phi}$ \revise{(e.g., ``the heater is on when the user leaves'' in Fig. \ref{fig:exa}(b))}. Hence, we define $s^{\nvDash}$ as the \textit{spurious indicator} $I_s=s^{\nvDash}$ to assess the ability of repair patches
In many cases of $CEX^{\phi}$, $s^{\nvDash}=\varphi_{k}(s_k)$. But, there are still some cases: $CEX^{\phi}$ is a lasso-shaped path from $s_1$ which has a cycle (e.g., Fig. \ref{fig:insight}), so $\varphi_{k}(s_k)\neq s^{\nvDash}$. Hence, NPR further analyses $CEX^{\phi}$ to find  $I_s$. Based on the given property $\phi$ and $CEX^{\phi}$, NPR converts the problem of identifying $I_s$ into the propositional logic satisfiability problem.
Specifically, NPR evaluates each automaton state as a judging unit and analyzes each of its entity values based on $\phi$'s LTL encoding \cite{Biere_Cimatti_Clarke_Zhu_1999}. If an entity in a state is unsatisfactorily encoded, this state is considered as $I_s$.

\vspace{-2mm}
\subsection{Patch Reasoning}
\label{subsec:regar}

The core problem of NPR is to generate an abstract model $\mathcal{M}_{\Theta}^{\phi}$ via model abstraction and refinement, with which NPR can reason a global feasible patch via $\mathcal{M}_{\Theta}^{\phi}\models\neg\phi$. Both constraints and assignments in $\mathcal{M}^{\phi}$ can be expressed by predicates whose entity attributes and values represent model semantics and configurations, respectively.
We define two types of predicates: \textbf{\textit{status predicate}} (denoted by $takeValue(x)$) depicts ``an entity attribute $x$ takes the value of $Value$'', where $x$ is the semantic and $Value$ is the configuration; \textbf{\textit{trigger predicate}} (denoted by $isTrigger(y)$) depicts ``the constraint related to an entity attribute $y$ is the state trigger''.
Note that the trigger predicate is necessary for a rule to represent its existence in the model.
A TAP rule can be expressed as a compound proposition in the form of predicates in the model.
For example, $r_3$ in Fig. \ref{fig:rin}(b) can be expressed as $isTrigger(Temp) \land takeBelow\,16\textcelsius(Temp) \mapsto takeOn(Heater.switch)$.
Based on predicates, we now discuss our design of model abstraction via interpolation in terms of semantics and configuration, repair patch reasoning, and reasoning-guided abstraction refinement in NPR as follows:

\textbf{Model Semantic Abstraction}.
\label{rule semantics abstract}
Abstracting model semantics via predicate interpolation can introduce new behaviors to the current state space for patch construction.
Given the model $\mathcal{M}^{\phi}$ related to a property $\phi$, NPR extracts predicates from model transitions to merge as a universal predicate set $U$ and defines a predicate set $\Theta$ for semantic interpolation, whose initial items are predicates defined in $\phi$. To perform semantic interpolation, NPR chooses these transitions in $\mathcal{M}^{\phi}$ that contain these semantics used in $\Theta$ as interpolation targets. We discuss semantic interpolation in two defective scenarios.

For the defective scenario that requires perfecting existing rules, repairing is the problem of perfecting status predicates in all model states. NPR interpolates each state $s_i$ in $\mathcal{M}^{\phi}$ with the difference predicate set $U-s_i$, so as to provide a full model space for predicate modification or generation based on new semantics. Take $r_3$ in Fig. \ref{fig:rin}(b) for example, its $U$ and $S$ consists of predicates on attributes $\{Temp, Presence, Heater.switch\}$ and $\{Temp, Heater.switch\}$, respectively. Hence, its $\Theta$ consists of predicates on $\{Presence\}$, and these predicates are connected by disjunction $\lor$.
TAPFixer interpolates $r_3$ as follows:

\vspace{-4mm}

\begin{small}
\begin{align*}
&isTrigger(Temp)\land takeBelow\,16\textcelsius (Temp)\land (takeOn(Presence)\\
&\lor takeOff(Presence))\mapsto takeOn(Heater.switch)
\end{align*}
\end{small}
\vspace{-4mm}

For the defective scenario that requires new rules, repairing is the problem of generating new transitions that have a new trigger predicate. To generate an interpolated transition, NPR includes $U$ into $\Theta$ to generate its state with a trigger and state predicate and uses the action of generating the device state desired in $\phi$ as its action. Take Fig. \ref{fig:rin}(a) for example, NPR joins predicates of the same type by disjunction $\lor$ and predicates of different types by conjunction $\land$ to construct an interpolated transition:

\vspace{-4mm}

\begin{small}
\begin{align*}
&\left(isTrigger(\ce{CO2})\lor\!isTrigger(Fan)\lor\!isTrigger(Presence)\right)\\
&\land (takeHigh(\ce{CO2})\!\lor\!takeModerate(\ce{CO2})\!\lor\!takeLow(\ce{CO2}))\\
&\land (takeOn(Fan)\!\lor\!takeOff(Fan))\land\!(takePresent(Presence)\\
&\lor takeNotPresent(Presence)) \mapsto takeOn(Fan)
\end{align*}
\end{small}

\vspace{-4mm}

Predicate interpolation can induce a huge state space and may cause a state explosion for satisfiability checking.
We observe that in fact, the satisfiability of many predicates can be denoted by one predicate due to physical constraints among them.
Hence, NPR introduces a Boolean variable to represent one or more joint predicates.
For instance, the satisfiability of $takeOn(Fan)$ is denoted by $flag\!\to\!takeOn(Fan)$ and its truth value is equivalent to $flag$.
For status predicates of the same entity attribute, NPR constrains the sum of their flags to be 1 or 0, since an entity either has only one state at a time or does not exist.
Similarly, for the trigger predicate of a TAP rule, NPR constrains its sum to be 1 or 0 since the rule has only one trigger at a time or does not exist.
Simplifying the satisfiability of predicates with a single Boolean variable can avoid state explosion and reduce model checking overhead.

\textbf{Model Configuration Abstraction}.
There are many enumerated (e.g., switch) and numerical configurations (e.g., illuminance) in the HA system. While the former has a small enumeration scope, the latter may have a large range resulting in an excessively large number of predicates. Hence, for abstracting a numerical configuration, NPR only takes a few practical values as its abstraction set for predicate generation, rather than the full value scope \revise{to reduce the state space of abstract models}: NPR first finds its variable values in rules that are logically and physically related to, as its initial abstraction set; afterward, according to the physical effect received by it in one unit of time, NPR appends the change value of each element in its abstraction set in three units of time to the set and removes duplicated values.
For instance, the initial abstraction set of temperature in Fig. \ref{fig:rin}(b) is $\{16, 20, 24\}$, and its complementary values range from 13 to 27 caused by physical effects, which is much smaller than the original scope of temperature.

Latency is a special model configuration since it contains both numerical and enumerated configurations. We observe that fixing the vulnerability by adjusting or even removing the numerical delay defined in the rules would violate the user's intent.
To be user-centric, NPR selects enumerated latency configurations (used to postpone action execution such as $runIn$ in SmartThings and $wait\_trigger$ in HomeAssistant) as the abstraction scope and follows the above abstraction method to construct their abstraction set, which preserves the desired delay. Take $r_1$ in Fig. \ref{fig:rin}(a) for example, the configuration of its action can be abstracted in the predicate form of the original delay type $take15minOn$ and other delay type $takeDelayOn$ as follows:

\vspace{-4mm}

\begin{small}
\begin{align*}
&isTrigger(\ce{CO2}) \land takeHigh(\ce{CO2})\\
&\mapsto take15minOn(Fan) \lor takeDelayOn(Fan)
\end{align*}
\end{small}
\vspace{-4mm}

\noindent For $takeDelayOn$ (\textit{i.e.}, waiting for \ce{CO2} to be no longer high, e.g., 1000 ppm), NPR sets a Boolean variable \emph{wait\_trigger} as its abstraction set. NPR performs model configuration abstraction after model semantic abstraction is finished.

\textbf{Repair Patch Reasoning}.
Through abstraction, the abstract model $\mathcal{M} _{\Theta}^{\phi}$ involves a larger state space which may encompass relevant repair patches. NPR performs the model checking ($\mathcal{M} _{\Theta}^{\phi}\models \neg \phi$) to identify a negated counterexample $CEX_{\Theta}^{\neg\phi}$ which is an execution path that can reach the secure $\phi$-space.
That means $\mathcal{M} _{\Theta}^{\phi}\models \neg \phi$ and $CEX_{\Theta}^{\neg\phi}$ may provide a potential solution for fixing $CEX^\phi$. Hence, we call $CEX_{\Theta}^{\neg\phi}$ a possible repair \textit{patch} (\textit{i.e.}, $P=CEX_{\Theta}^{\neg\phi}$). However, due to impossible paths or states introduced by abstraction, the patch may be infeasible or be able to introduce other new vulnerabilities in the original rule model. Hence, NPR validates its feasibility (see $\S$\ref{subsec:pco}), and for infeasible patches, NPR refines $\mathcal{M}_{\Theta}^{\phi}$ as follows. In addition, it is possible that there is no available patch in the current abstract state space due to inaccurate abstraction or impossible state for $\phi$. NPR will further refine the model before timing out.

\textbf{Reasoning-guided Abstraction Refinement}. The fundamental reason for infeasible patches is coarse-grained abstraction. To eliminate invalid repair patches, NPR first classifies a repair patch $P$ as $P_C$ (correct), $P_I$ (implausible), and $P_X$ (plausible but incorrect) according to feasibility checking in $\S$\ref{subsec:pco}, and refines the granularity of abstraction to exclude infeasible transitions using the invalid patch ($P_I$ or $P_X$) as follows.

To eliminate interpolations leading to $P_I$, NPR extracts the automaton subsequence before the spurious indicator and keeps certain critical automaton variables unchanged within the abstraction. Specifically, for tardy attributes and immediate attributes that are not affected by the device (e.g., rainy weather and the activation of the motion sensor), NPR collects their values in each state of the subsequence and keeps them unchanged during model refinement.
To eliminate interpolations leading to $P_X$, NPR extracts these incorrect combinations of predicates in the previous abstraction and removes them by appending an invariant to the abstract model during refinement. For instance, changing the weather to fix the vulnerability in Fig. \ref{fig:exa}(b) is plausible but incorrect. To eliminate this $P_X$, NPR adds a negation invariant $\neg \,takeNotRainy(Weather)$ in the next abstraction, which avoids the weather predicate incorrectly taking the value of not rainy.

Through iterations of model abstraction and refinement, NPR iteratively refines the abstraction granularity and eliminates invalid repair patches \revise{until a global feasible patch is found.} The procedure is repeated $ITER\_LIMIT$ times.

\subsection{Patch Feasibility Checking}
\label{subsec:pco}

Given a patch $P$, NPR needs to validate its feasibility for vulnerability repair. A straightforward method is to update the model using $P$ and verify with $\phi$. However, such a method may have an expensive checking overhead.
We observe that some patches contain impossible changes to environment constraints (e.g., weather), which can be identified locally.
Hence, we design the method of patch feasibility checking for NPR including local and global feasibility checking.

\textbf{Local Feasibility Checking}. NPR aims to perform local feasibility checking to distinguish the patch $P$ as $P_C$, $P_I$, or $P_X$. We summarize the following two interaction scenarios containing spurious constraint changes:

(1) Altering \emph{physical environment attributes unaffected by device executions} to eliminate the violating state $s^{\nvDash}$ is invalid. According to the position where environment attributes are changed in $CEX^{\phi}$, $P$ can be categorized as $P_I$ or $P_X$.
For instance, the vulnerability $CEX^{\phi}$ of windows opening during rainfall can be repaired by changing the weather to no rain in the preceding state of $s^{\nvDash}$, but this is obviously not possible. The above example belongs to $P_X$ since changing weather can eliminate $s^{\nvDash}$, but is infeasible.
Regarding $P_I$, taking Fig. \ref{fig:exa}(b) for example,
the implicit effect that opening windows will decrease the indoor temperature only occurs when the outdoor temperature is lower than the indoor temperature; otherwise, the defect rule $r_5$ will never be blocked (e.g., the outdoor temperature is 26\textcelsius). This scenario belongs to $P_I$ since it changes the fixed outdoor temperature from 12\textcelsius\, to 21\textcelsius, which is unrealistic and cannot eliminate $s^{\nvDash}$ in $CEX^{\phi}$.

(2) \textit{The non-elimination of the violating state} in logical environment $CEX^{\phi}$ is another unsatisfied case.
Vulnerabilities may not disrupt the entire rule execution and safe executions can still exist in the rest of rule interactions, even if the current vulnerability is not eliminated. Take the scenario in Fig. \ref{fig:exa}(b) for example, there may be a safe situation where rules are not intersected when the indoor temperature is 18\textcelsius\,\, and the heater is off while the user is not at home before the violation occurs, but the violating state still occurs in the same place. Such scenarios also correspond to $P_X$ since they are relevant to $\phi$ but cannot eliminate the violating state.

To identify spurious patches, NPR extracts environment constraints from the patch $P$ and compares them with those in $CEX^{\phi}$ to classify $P$ as $P_I$ or $P_X$ as follows: (1) to identify $P_I$, NPR adopts unaffected entities to determine the constraint satisfiability after abstraction since their execution paths should remain unchanged during model abstraction. Hence, NPR compares states in $P$ with these in $CEX^{\phi}$. If there are different values of any unaffected attributes, the constraint is unsatisfied and $P$ is set as $P_I$;
(2) to identify uncontrollable $P_X$, NPR figures out the transition in $P$ leading to its $s^{\neg\nvDash}$ and checks if it can be manipulated by the HA system. If not controllable, NPR sets $P$ as $P_X$; (3) to identify $P_X$ that fails to eliminate $s^{\nvDash}$, NPR compares $P$ with $I_{s}$: if $I_{s}$ exists in $P$. It means the violating state $s^{\nvDash}$ in $CEX^{\phi}$ can not be eliminated by $P$. Hence, $P$ is classified as $P_X$.

\textbf{Global Feasibility Checking.}
A correct repair patch needs to be able to fix the identified vulnerability $CEX^{\phi}$ without introducing new vulnerabilities in other rule interactions (\textit{i.e.}, global feasibility).
Given a locally feasible patch $P=P_C$, NPR further verifies its correctness in the original model $\mathcal{M}^{\phi}$. NPR updates $\mathcal{M}^{\phi}$ as a fixed model $\mathcal{M}^{\phi}_{P_C}$ by modifying existing transitions, adding new transitions, or removing deleted transitions according to transitions in $P_C$. Then, NPR performs model checking $\mathcal{M}^{\phi}_{P_C} \models \phi$.
If no counterexample, $P_C$ is globally feasible and NPR will terminate to output $P_C$; otherwise, NPR performs iterative model refinement to optimize the abstract model until a globally feasible patch is generated.
\vspace{-3mm}
\subsection{TAP Rule Repair}
\label{subsec:rr}
\revise{Unlike traditional software, repairing TAP rules may change their meanings and the effect the user desires. Hence, inspired by \cite{Chi_Zeng_Du,zhang2019autotap}, to ensure that the patch can effectively repair vulnerabilities and is user-centric, it must be (1) \textit{accommodating}: original behaviors that are verified to be globally safe should be preserved; (2) \textit{safety-compliant}: the patched TAP system should be globally correct; (3) \textit{valid}: patches should be deployable. Specifically, to be accommodating, TAPFixer retains rules without vulnerabilities instead of modifying them since abstraction on them does not eliminate the violating state and fails to pass the local feasibility check. For defective rules, TAPFixer retains them and repairs them by complementation (both for existing and new rules) rather than discarding them. Since TAPFixer applies a reasoning-guided strategy in model refinement to ensure that the constructed patch is globally feasible, $P$ is \textit{safety-compliant}. Besides, the repair process is user-centric since predefined correctness properties match user preferences and patch results are reported to users to confirm. To be \textit{valid}, TAPFixer ensures that $P$ follows rule syntax and physical constraints (see $\S$\ref{subsec:pco}).}

Specifically, once a global feasible patch $P$ is identified, TAPFixer uses $P$ to patch the original rule set $R^{\phi}$ to repair the violation $CEX^{\phi}$. It first identifies these transitions $\{\varphi_j\}_0^K$ ($K$ is the number of new transitions) in $P$ different from $\mathcal{M}^{\phi}$ to construct as the repair policy $\widehat{\Sigma}$.
Following the path of $P$, TAPFixer identifies different elements of each transition $\varphi_j$ in $\widehat{\Sigma}$ to generate updating operations to $R^{\phi}$ as follows: if $\varphi_j$ has a new trigger predicate or assignment (\textit{i.e.}, a new action) than existing transitions, it translates $\varphi_j$ into a new TAP rule and sets into $R^{\phi}$; if $\varphi_j$ has a partial different status predicate, it updates the corresponding existing rule by modifying its condition \revise{or action} according to $\varphi_j$; if $P$ skips some existing transitions, it removes corresponding TAP rules. \revise{Take Fig. \ref{fig:rin}(a) for example, the generated $\{\varphi_j\}_0^K$ contains two predicates $takeDelayOn(Fan)$ for $r_1$ and $r_2$. Hence, NPR generates a condition (``\ce{CO2} is low'') to both update the activation condition of $a_\Box$ (``turn off the fan'') in $r_1$ and $r_2$.}

\revise{Finally, these repaired rules are required to be converted into the form of corresponding HA app programs. However, there is no unified programming language used in different HA platforms, e.g., natural language in IFTTT, Groovy in SmartThings, and GUI in MI Home.
Hence, TAPFixer currently only focuses on generating repair patches in the form of rule syntax (\textit{i.e.}, rule patches) and outputting found counterexamples with corresponding properties and repair patches to users. We leave program repairs as manual work which may require users to follow the patch to reprogram HA apps using the language provided by the platform. This process is straightforward since HA apps are relatively small and simple.}

\vspace{-2mm}
\section{Implementation and Evaluation}
\subsection{Implementation}

To evaluate TAPFixer, we implement a prototype system of TAPFixer with 4216 lines of code in Python, including: (1) \emph{Modeling Information Extractor}, (2) \emph{Model Builder}, (3) \emph{Model Validator}, (4) \emph{CEX Analyzer}, and (5) \emph{Model Abstractor}.
We use a mature symbolic model checker nuXmv\cite{nuXmv} as the model-checking engine in TAPFixer so that it can perform unbounded model checking with given properties in the LTL formula to verify rule interaction vulnerabilities and bounded model checking (BMC) for repair patch reasoning to repair them. Besides, we develop 53 relevant correctness properties (shown in Appendix Table \ref{tab:sp}) based on vulnerability patterns to detect and fix the vulnerability. The source codes of our TAPFixer is available at: \url{https://github.com/q1uTr5th/TAPFixer}.

\subsection{Experiment Setup}
The TAPFixer prototype is evaluated from four aspects: \textit{case study}, \textit{vulnerability verification and repair of real-world HA apps}, \textit{user study}, and \textit{performance benchmarks}. The case study is used to validate TAPFixer's accuracy for detecting and fixing vulnerabilities in comparison to existing approaches. A set of IoT market apps is then applied to evaluate whether TAPFixer can identify and repair property violations in practice and analyze the repair success rate (RSR) achieved by TAPFixer. We next conduct a user study to investigate potential vulnerabilities introduced during rule configuration by users with varying levels of knowledge about TAP rules, so as to verify the bug-fixing capability of TAPFixer in practical IoT scenarios. Finally, we analyze the time overhead of TAPFixer in vulnerability detection and repair generation.

We evaluate TAPFixer on two most popular HA platforms SmartThings and IFTTT which have been widely studied in the existing literature. We totally obtain 1177 TAP rules from two aspects: 49 groovy apps with a total of 149 rules from a public SmartThings benchmark IoTMAL \cite{IoTBenchtestsuite} and 1028 IFTTT applets from the IFTTT dataset used in \cite{mi2017empirical}. Device capabilities provided in \cite{DeviceCapabilitiesReference} are also collected to formulate TAP rules. TAPFixer is evaluated on an HP desktop with 2.30GHz Intel Core i7-11800H and 16GB memory.

\subsection{Case Study of Accuracy}
\label{casestudy}

Our case study is conducted from two aspects: detection and repair. First, to the detection accuracy of TAPFixer, we use a detection benchmark SmartHomeBench \cite{SmartHomeBench} (extended based on IoTMAL) to compare TAPFixer with several existing methods, including SOATERIA\cite{Celik_McDaniel_Tan_2018}, SAFECHAIN\cite{hsu2019safechain}, IOTCOM\cite{alhanahnah2020scalable}, and TAPInspector\cite{yu2022tapinspector}. Note that SOATERIA and TAPInspector are closed-sourced, but we can use their report results provided in \cite{Celik_McDaniel_Tan_2018, alhanahnah2020scalable, yu2022tapinspector}. Hence, we run SAFECHAIN, IOTCOM, and TAPFixer on SmartHomeBench and show the accuracy comparison results in Table \ref{tab:accuracy_compare}. Individual apps \textit{ID-1-9} and app groups \textit{Gp-1-3} contain violations only related to rule logic. \textit{Gp-4-6} contain violations related to physical channels. \textit{N1-3} and \textit{Gp-N4-5} provides violation cases corresponding to expanded vulnerabilities. Besides, we further develop a SmartApp \textit{N-6} containing a rule ``\textbf{IF} smoke detected, \textbf{THEN} open water valve for 10min.'' with the property P.42 as a benchmark case, , which can be used to evaluate the detection capability of nondeterminacy violations.
The time threshold for evaluating the water valve to eliminate fires is determined by the fire intensity. The threshold in \textit{N-6} is preset to fixed 10 min, but it may require to 15 min to eliminate the fire due to the greater fire intensity, which causes the water valve to close after 10 min and violates P.42.

We can find that SOATERIA and SAFECHAIN fail to identify expanded vulnerabilities due to limited modeling capabilities for physical and latency factors. IOTCOM supports the analysis of physical channels, so it can detect more violations.
TAPInspector supports most cases, but not the nondeterminacy in \textit{N-6}. TAPFixer introduces more comprehensive physical and latency modeling, thus identifying all violations successfully. The fundamental reason is that existing detection methods do not support the modeling of many physical features (we discuss them in Table \ref{tab:detect-work}), leading to a low detection capability.

Existing static vulnerability repair methods focus on simple rule conflicts and lack attention to expanded complex vulnerabilities.
Hence, we do not use SmartHomeBench as the repair benchmark since it contains many simple violations. In turn, we construct a set of flawed rules containing complex vulnerabilities as the benchmark to assess differences in the repair capability between TAPFixer and existing methods.
For conciseness, we give the text-based description of these flawed rules in Table \ref{flawed}.
\emph{Group 1} contains three rules shown in Fig. \ref{fig:rin}(b) and involves a vulnerability \textbf{V4}.
In \emph{Group 2}, the delay used to turn off the AC is mistakenly configured to turn it on. It leads to the vulnerability \textbf{V5} that two rules interact in latency disorder and perform conflicting actions, which causes the AC to not turn off automatically after it is on. There is a vulnerability \textbf{V6} in \emph{Group 3} that if the user leaves less than ten minutes after returning home, the action of the first rule can override the power-off effect of the second rule.
\emph{Group 4} is the extended scenario of Fig. \ref{fig:rin}(a), where the vulnerability \textbf{V7} also occurs between the first and second rule.
\emph{Group 5} contains four rules for temperature regulation. If the temperature is below 18\textcelsius\ when the user presents, the vulnerability \textbf{V8} occurs since the third rule may be triggered before the second rule blocks its activation. We also consider physical- and latency-related initialization scenarios (\emph{N/A 1-2}) that violate the event-based property P.28 and state-based property P.21 (in Appendix Table \ref{tab:sp}), respectively.

\begin{table}[]
\caption{Accuracy comparison of the vulnerability detection. We use \truep\ ,\falsep\ , and \falsen\ to denote true positive, false positive, and false negative, respectively.}
\label{tab:accuracy_compare}
\resizebox{\linewidth}{!}
{
\begin{threeparttable}
\begin{tabular}{|c|c|c|c|c|c|}
\hline
Benchmark    & SOATERIA\tnote{*} & SAFECHAIN & IOTCOM & TAPInspector\tnote{*} & TAPFixer \\ \hline
\textit{ID-1} &    \truep\      &     \truep\      &   \truep\     &      \truep\        &     \truep\     \\ \hline
\textit{ID-2} &      \falsep\    &     \falsen\      &    \truep\    &       \truep\       &     \truep\     \\ \hline
\textit{ID-3} &     \truep\     &      \falsen\     &   \truep\     &      \truep\        &     \truep\     \\ \hline
\textit{ID-4} &   \truep\ \falsen\       &    \falsen\       &   \truep\     &        \truep\      &     \truep\     \\ \hline
\textit{ID-5} &    \falsen\      &     \truep\      &     \falsen\   &      \truep\        &    \truep\      \\ \hline
\textit{ID-6} &     \truep\     &     \falsen\      &    \truep\    &      \truep\        &    \truep\      \\ \hline
\textit{ID-7} &    \truep\     &     \truep\      &    \truep\    &       \truep\       &   \truep\       \\ \hline
\textit{ID-8} &     \truep\     &    \falsen\       &    \truep\    &       \truep\       &   \truep\       \\ \hline
\textit{ID-9} &      \falsep\    &      \truep\     &    \truep\    &      \truep\        &   \truep\       \\ \hline
\textit{N-1}  &      \falsen\    &      \falsen\     &    \falsen\    &      \truep\        &   \truep\       \\ \hline
\textit{N-2}  &     \falsen\     &     \falsen\      &      \falsep\  &     \truep\         &    \truep\      \\ \hline
\textit{N-3}  &    \falsen\      &      \falsen\     &    \falsen\    &       \truep\       &    \truep\      \\ \hline
\textit{N-6}  &      \falsen\    &    \falsen\       &   \falsen\   &        \falsen\      &   \truep\       \\ \hline
\textit{Gp-1} &     \truep\     &     \falsen\      &    \truep\    &      \truep\        &   \truep\       \\ \hline
\textit{Gp-2} &     \truep\     &     \falsen\      &    \truep\    &      \truep\        &   \truep\       \\ \hline
\textit{Gp-3} &      \truep\    &       \truep\    &    \truep\    &      \truep\        &    \truep\      \\ \hline
\textit{Gp-4} &     \falsen\     &     \falsen\      &    \truep\    &    \truep\          &    \truep\      \\ \hline
\textit{Gp-5} &     \falsen\     &      \falsen\     &    \truep\    &     \truep\         &  \truep\        \\ \hline
\textit{Gp-6} &     \falsen\     &     \falsen\      &    \truep\    &     \truep\         &    \truep\      \\ \hline
\textit{Gp-N4}  &     \falsen\     &       \falsen\    &   \falsen\     &      \truep\        &    \truep\      \\ \hline
\textit{Gp-N5}  &      \falsen\    &       \falsen\    &    \falsen\    &       \truep\       &   \truep\       \\ \hline

\end{tabular}
\begin{tablenotes}
    \item[*] results obtained from \cite{Celik_McDaniel_Tan_2018, alhanahnah2020scalable, yu2022tapinspector}
\end{tablenotes}
\end{threeparttable}
}
\end{table}

\begin{table}[t!]
\centering
\caption{Designed flawed rule groups.}\label{tab:frg}
\label{flawed}
\resizebox{\linewidth}{!}
{
\begin{tabular}{|c|c|l|}
\hline
\begin{tabular}[c]{@{}c@{}}Bench-\\mark\end{tabular} & \begin{tabular}[c]{@{}c@{}}Vulner-\\ability\end{tabular} & \multicolumn{1}{c|}{Rule Description}                                                                                                                                                 \\
\hline
\emph{Group 1}   &        \textbf{V4}                                                              &    \begin{tabular}[c]{@{}l@{}}\textbf{IF} temperature $\textless$ 16\textcelsius, \textbf{THEN} turn on heater.\\\textbf{IF} temperature $\textgreater$ 24\textcelsius, \textbf{THEN} turn off heater.\\\textbf{IF} temperature $\textgreater$ 20\textcelsius, \textbf{THEN} open window.\end{tabular}                                                                                                                                                                              \\
\hline
\emph{Group 2}   &         \textbf{V5}                                                                   &   \begin{tabular}[c]{@{}l@{}}\textbf{IF} after user present 10 min, \textbf{THEN} turn on AC.\\\textbf{IF} user is present, \textbf{THEN} turn off AC.\end{tabular}                                                                                                                                                                               \\
\hline
\emph{Group 3}   &      \textbf{V6}                                                                      &     \begin{tabular}[c]{@{}l@{}}\textbf{IF} after user present 10 min, \textbf{THEN} turn on blanket.\\\textbf{IF} user is not present, \textbf{THEN} turn off blanket.\end{tabular}                                                                                                                                                                              \\
\hline
\emph{Group 4}   &           \textbf{V7}                                                                 & \begin{tabular}[c]{@{}l@{}}\textbf{IF} \ce{CO2} \revise{> 1000 ppm}, \textbf{THEN} turn on fan for 15min.\\\textbf{IF} air \revise{humidity > 80\%}, \textbf{THEN} turn on fan for 10min.\\\textbf{IF} user is present, \textbf{THEN} turn on fan for 5min.\end{tabular}  \\

\hline
\emph{Group 5}          &     \textbf{V8}                                                                     &       \begin{tabular}[c]{@{}l@{}}\textbf{IF}  temperature $\textless$ 18\textcelsius\ \textbf{WHILE} user is present,\\ \textbf{THEN} turn on heater.\\
\textbf{IF}  temperature $\textless$ 25\textcelsius\, for 20 min \textbf{WHILE} user is \\present and window is closed, \textbf{THEN} turn off heater.\\
\textbf{IF} temperature $\textgreater$ 27\textcelsius, \textbf{THEN} open window.\\
\textbf{IF} temperature $\textless$ 16\textcelsius, \textbf{THEN} close window.\end{tabular}                                                                                          \\

\hline
\emph{N/A 1}   &           P.28 violation                                                                 & The violation of event-based P.28 with no rules.\\
\hline
\emph{N/A 2}   &           P.21 violation                                                                 & The violation of state-based  P.21 with no rules.\\
\hline
\end{tabular}
}
\end{table}

\begin{table}[t!]
\centering
\caption{Repair accuracy comparison of expanded vulnerabilities.
}
\label{otherwork1}
\footnotesize
\resizebox{\linewidth}{!}
{
\begin{threeparttable}[b]
\begin{tabular}{|c|c|c|c|c|}
\hline
Benchmark & Liang et al. \cite{liang2016systematically} & MenShen \cite{bu2018systematically} & AutoTap \cite{zhang2019autotap} & TAPFixer  \\
\hline
 \emph{Group 1}   &   \color{carnelian}\faTimesCircle\tnote{$\dagger$}    &  \falsep  & \color{carnelian}\faCircle\tnote{$\ddagger$}  &  \truep    \\
\hline
  \emph{Group 2}   &  \falsep    &  \falsep  & \falsep  &   \truep  \\
\hline
  \emph{Group 3}   &  \falsep    &  \falsep  &  \falsep  &   \truep   \\
\hline
  \emph{Group 4}   &  \falsen    &  \falsen  & \falsep  &    \truep  \\
\hline
  \emph{Group 5}   &  \falsen    &  \falsen  &  \truep   &   \truep  \\
\hline
 \emph{N/A 1}      &  \falsen    &  \falsen  &  \truep  &   \truep \\
\hline
 \emph{N/A 2}      &  \falsen    &  \falsen  & \truep  &  \truep \\
\hline
\end{tabular}

 \begin{tablenotes}
        \footnotesize
        \item[$\dagger$]  Correctly fixed partial vulnerable rule interactions, but did not fix the rest.
        \item[$\ddagger$]  Correctly fixed partial vulnerable rule interactions, but incorrectly fixed the rest.
      \end{tablenotes}
\end{threeparttable}
}
\end{table}

We summarize the accuracy of TAPFixer and other approaches in Table \ref{otherwork1}. For these groups,  Liang et al. \cite{liang2016systematically}, MenShen \cite{bu2018systematically}, and AutoTap \cite{zhang2019autotap} fail or generate faulty patches. Liang et al. only partially fix the vulnerability in \emph{Group 1} \revise{(\textit{i.e.}, set a condition ``the user is at home'' in the first rule)} since it only focuses on current trigger conditions, but ignores the one derived from existing actions or new rules. For instance, the fix for the vulnerability in \emph{Group 1} and \emph{N/A 1} requires creating new rules, while the fix for \emph{Group 4} requires the action supplement. Liang et al. consider such fixes to be useless and are therefore classified as false negatives. For groups where they generate false positives (\emph{Group 2-3}), they remove the delay or change the temperature condition expected by the user, which violates the user's intent.
MenShen fails to fix all vulnerabilities since it lacks the formulation of rule semantics.
AutoTap fixes vulnerabilities in \emph{Group 1, 5} and \emph{N/A 1-2} to vary degrees with false positives detected in \emph{Group 1-4}.
The reason is that it merely focuses on the solution space derived from new rules but neglects the one associated with existing rules.
Thus, AutoTap does not prevent the expanded vulnerability from occurring but only reacts to it after it has occurred.
For example, AutoTap will \revise{generate} a patch for \emph{Group 4} to turn on the fan immediately after it is interrupted and still leave the fan to be interrupted when the \ce{CO2} level is high.

In comparison with existing approaches, TAPFixer can fix expanded vulnerabilities in all groups, including configuring from scratch. It is attributed to the accurate patch reasoned by our proposed negated-property reasoning algorithm. We have also tested the effectiveness of these fixes by deploying these TAP rules on HomeAssistant. In summary, TAPFixer achieves 100\% accuracy for this benchmark.
\vspace{-2mm}
\subsection{Market App Study of Violation Repair}
\label{marketapps}
To evaluate the generality of TAPFixer to repair market apps, we mainly focus on IFTTT and SmartThings.
Based on 9 modeled physical channels, we build a set of sensor groups that detect changes in the environment, including (1) temperature sensor; (2) light sensor; (3) presence sensor; (4) smoke sensor; (5) humidity sensor; (6) carbon monoxide sensor; (7) carbon dioxide sensor; (8) sound sensor; (9) weather sensor. By summarizing the general functional scenario of the actuator based on safety properties, we build a set of actuator groups, including (1) light; (2) door control, garage door control, and security camera; (3) air conditioner, heater, and electric blanket; (4) fan, window, sprinkler system, water valve, gas valve, natural gas hot water heater, and alarm; (5) smart plug, oven, and coffee maker.
We then randomly select 15-30 rules from the total rule set to form a rule group, which is associated with four categories of sensor groups and two categories of actuator groups at least. In total, we obtained 110 rule groups with an average of 21 rules per group.

\begin{table}[t!]
\centering
\caption{Summary of detection and repair results for G1-G7 with 110 rule groups, each of which contains 15-30 TAP rules.}
\resizebox{\linewidth}{!}
{
\begin{tabular}{|C{3cm}|C{1.5cm}|C{1.5cm}|C{1cm}|C{1.5cm}|c|}
\hline
{Application scenarios} & {Fixed violations} & {Unfixable violations} & {Safe cases} & {Generated patches} & {RSR$\uparrow$} \\\hline
\textbf{G1} (2 properties)         & 179                 & 35                      & 6            & 364                   & 83.64\%          \\\hline

\textbf{G2}
~(21 properties)      & 1675                & 277                     & 358          & 2228                  & 85.81\%          \\\hline
\textbf{G3}
~(6 properties)       & 459                 & 201                     & 0            & 902                   & 69.55\%          \\\hline
\textbf{G4}
~(8 properties)       & 687                 & 59                      & 134          & 1145                  & 92.09\%          \\\hline
\textbf{G5}
~(9 properties)       & 870                 & 68                      & 52           & 1288                  & 92.75\%          \\\hline
\textbf{G6}
~(3 properties)       & 272                 & 58                      & 0            & 491                   & 82.42\%          \\\hline
\textbf{G7}
~(4 properties)       & 402                 & 2                       & 36           & 419                   & 99.50\%          \\\hline
Total                     & 4544                & 700                     & 586          & 6837                  & 86.65\%     \\\hline
\end{tabular}
}
\label{appscenarios}
\vspace{-2mm}
\end{table}

\begin{figure}[t!]
\centering
\includegraphics[width=0.47\textwidth]{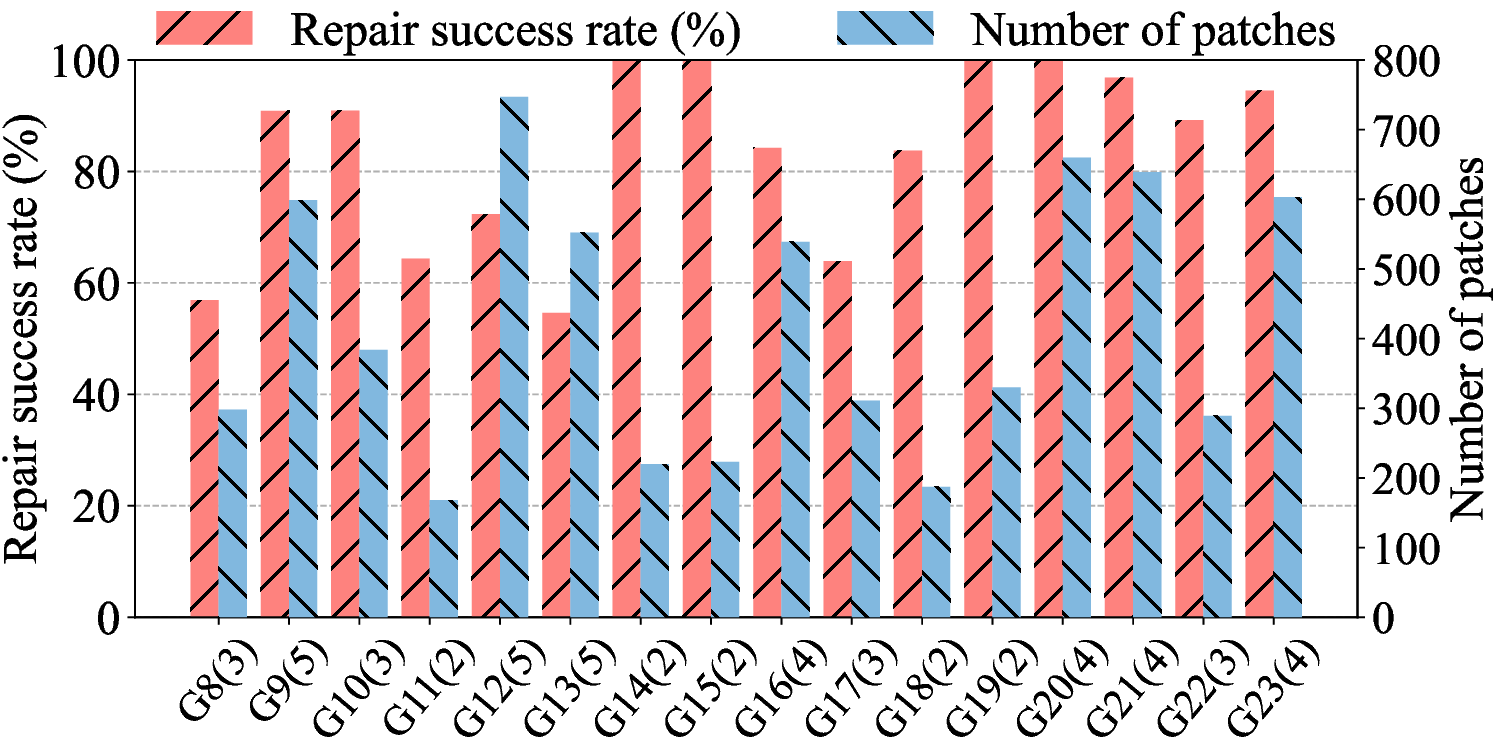}
\caption{Repair success rate and the number of generated patches for G8-G23 with 110 rule groups. The numbers in parentheses represent the number of properties contained in the property set.}
\label{fig:pic2}
\end{figure}

With these generated rule groups, we verify and fix rule interaction vulnerabilities based on correctness properties from two aspects: safety and prioritization of application scenarios. The former ensures that all modeling devices operate in a safe status (e.g., all appliances are turned off) in the same application scenarios (e.g., when no one is home), while the latter ensures that the same actuator can process safety requests with higher priority if there are conflicting safety properties in different application scenarios.
We categorize properties into 7 groups \textbf{(G1-G7)} by determining whether the application scenario is the same based on the pre-proposition of properties. We also categorize them into 16 groups \textbf{(G8-G23)} based on the prioritized status of the same actuator in the post-proposition of properties. The specific categories \textbf{(G1-G23)} for each property are shown in Appendix Table \ref{tab:sp}.

Following existing evaluation methods \cite{alhanahnah2020scalable, Celik_McDaniel_Tan_2018, celik2019iotguard, ding2021iotsafe}, we evaluate the accuracy of TAPFixer by manually checking found property violations and patches. The manual checking process is straightforward to perform given the output of TAPFixer since the number of rules is relatively small. The results of \textbf{G1-G7} are shown in Table \ref{appscenarios}. TAPFixer correctly fixes 4544 out of 5244 property violations, achieving an RSR 86.65\%. The results of \textbf{G8-G23} are shown in Fig. \ref{fig:pic2}. TAPFixer correctly fixes 4460 out of 5335 property violations, achieving an RSR of 83.60\%. For all properties in \textbf{G1-G23}, TAPFixer successfully repairs 9004 out of 10579 property violations, achieving a total RSR of 85.11\%.

TAPFixer fails to fix 1575 property violations in total. We manually analyze these failures and find that there are three major reasons:
(1) \emph{Predicate solving errors due to SMT}. Some local vulnerabilities can be fixed only by modifying some of all predicates. However, the SMT solver in nuXmv may unexpectedly specify the satisfiability of predicates that are irrelevant for violation repair, which can cause TAPFixer to fail to fix vulnerabilities correctly; (2) \emph{Limitations on the semantic abstraction of new transitions}. In model semantic abstraction, we currently only consider interpolating numerical variables as trigger predicates, not status predicates. It can invalidate scenarios that require numerical status predicates to fix vulnerabilities;
(3) \emph{Too large number of predicates}. We have limited the number of predicates to avoid state explosion. However, the number of predicates in some scenarios can still exceed the upper limit and cause TAPFixer to fail.

We further conduct a comparison with the SOTA AutoTAP \cite{zhang2019autotap}. Since AutoTAP cannot automatically extract rules from IoT apps, we randomly select 25 rule groups and manually write them using AutoTAP's interface to run AutoTAP.
AutoTAP only supports modeling limited rule features and does not support modeling of many device capabilities (e.g., garage door, heater, fan, sprinkler, etc.) and physical attributes (e.g., CO, \ce{CO2}, and humidity, etc), which may result in a low RSR due to modeling failures. Hence, to guarantee fairness, we filter out these correctness properties containing device capabilities (e.g., fan and sprinkler in \textbf{G6}) and physical attributes (e.g., CO and \ce{CO2} in \textbf{G5}) that are not supported by AutoTap, and finally obtain 22 correctness properties.
Besides, we introduce a metric \textit{Modeling Success Rate} (MSR) to assess the integrity of rule modeling and categorize \textit{Repair Failure Rate} (RFR) into RFR-MF and RFR-LIMIT to evaluate reasons for repair failures caused by modeling failures and repair algorithm limitations, respectively.

\begin{table}[t!]
\caption{Comparison between AutoTap and TAPFixer.}
\label{tab:RSR_compare}
\centering
\scalebox{0.7}
{
\begin{tabular}{|c|c|c|}
\hline
 Evaluation target & AutoTap & TAPFixer  \\ \hline
 MSR$\uparrow$       &      54.23\%            & 100\%    \\ \hline
RSR$\uparrow$ of \textbf{G1} (1 property)        &      20\%               & 44\%     \\ \hline
RSR$\uparrow$ of \textbf{G2} (14 properties)     &      51.43\%                   & 74.59\%     \\ \hline
RSR$\uparrow$ of \textbf{G3} (1 property)        &      8\%                & 92\%       \\ \hline
RSR$\uparrow$ of \textbf{G4} (4 properties)      &     44\%              & 94.32\%   \\ \hline
RSR$\uparrow$ of \textbf{G7} (2 properties)      &      38\%            & 91.49\%    \\ \hline
RFR-MF/RFR-LIMIT  $\downarrow$   &      23.99\%/24.57\%     & 0\%/20.93\%   \\ \hline
\end{tabular}
}
\vspace{-4mm}
\end{table}

We check MSR, RSR, RFR-MF, and RFR-LIMIT achieved by AutoTAP and TAPFixer and show the comparison results in Table \ref{tab:RSR_compare}.
In general, a lower MSR indicates that the constructed model contains fewer rule interactions and thus, is more likely to be reported as a successful fixed case and the RSR should be higher. However, the MSR and RSR of AutoTAP are both lower than TAPFixer. We can find that AutoTAP and TAPFixer totally achieve an RFR of 54.55\% and 20.93\%, respectively. Among these repair failures, after eliminating these failures caused by modeling failures, altough AutoTAP can model rules much less than TAPFixer, it still has a higher RFR-LIMIT (24.57\%).
The fundamental reason is that the repair capability of AutoTAP is lower than TAPFixer and may repair the rule group incorrectly. For example, in G2 with several rule groups violating the properties P.2, if the target light is controlled by a rule involving a numeric attribute (e.g., temperature), AutoTAP will not generate any repair patches. This is because AutoTAP directly filters out those actions activated by numeric states that could lead to other property violations, thereby avoiding additional analysis costs.
If not, AutoTAP can generate a patch rule (``If the light is off while the user is at home, then turn the light on''), which, however, cannot fix the violation.
Our TAPFixer can generate a correct patch rule ``If the user is not at home, then turn the light off'' both in these two cases since it uses an iterative abstraction and refinement process to correct and verify the feasibility of all generated possible patches at a low analysis cost.

\begin{figure}[t!]
\centering
\includegraphics[width=0.95\columnwidth]{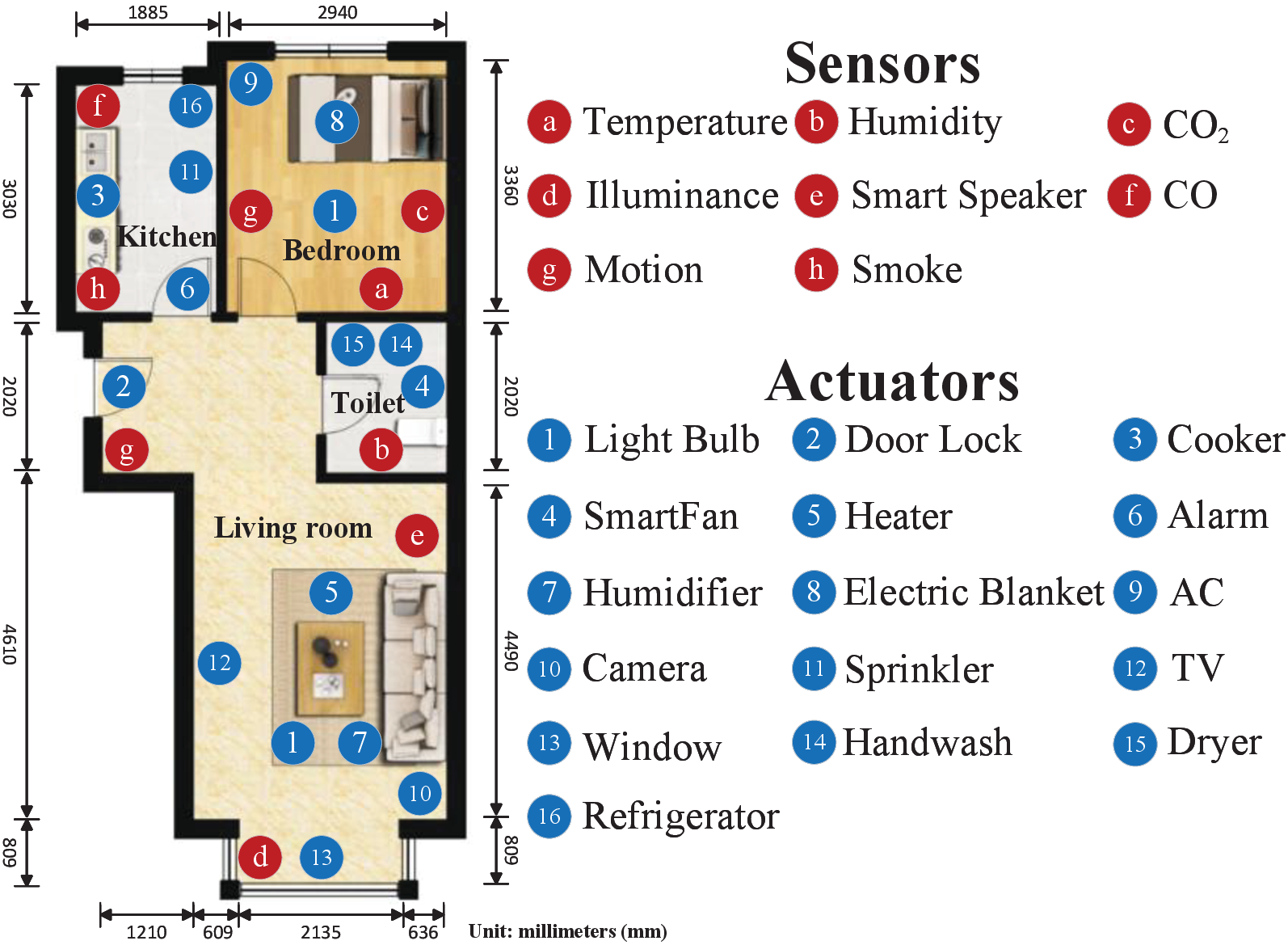}
\vspace{1mm}
\caption{Device layout in the smart home.}
\label{fig:house}
\end{figure}

\begin{table}[t!]
\centering
\caption{Number of identified and fixed vulnerabilities in 129 rules.}
\label{usertable}
\resizebox{\linewidth}{!}
{
\begin{tabular}{|c|c|c|c|c|c|c|c|c|}
\hline
                                                                          & V1   & V2  & V3     & V4    & V5  & V6    & V7    & V8     \\
\hline
\# found violations & 5    & 6   & 9      & 23    & 0   & 4     & 5     & 3      \\
\hline
\# fixed violations & 3    & 6   & 8      & 23    & 0   & 4     & 5     & 3      \\
\hline
RSR                                                              & 60\% & 100\% & 89.9\% & 100\% & N/A & 100\% & 100\% & 100\%  \\
\hline
\end{tabular}
}
\end{table}

\subsection{User Study}
\revise{To evaluate users' acceptance of patches generated by TAPFixer, we conduct a user study on the online questionnaire platform. We use the conceptual HA scenario in Fig. \ref{fig:house} and invite participants to set up TAP rules in the scenario. Since the scenario is fixed and our predefined properties in Table \ref{tab:sp} can almost cover all possible cases, we also require participants to pick up properties in Table \ref{tab:sp} for violation detection and repair.} We received 23 questionnaires and the distributions of our participants are as follows: 47.8\% are male and 52.2\% are female, 87.0\% are 20-35 years old, and 13\% are over 35 years old, 52.2\% indicate that they have professional experience in computer science, 78.3\% own smart devices and $73.9\%$ are familiar with home automation rules.

\revise{Given questionnaires specified in natural language, we manually unify different descriptions having the same meanings into the same term, so that we can map entities in users’ rule descriptions into TAP rule syntax accurately and extract TAP rules from these descriptions.} We obtain 394 TAP rules with an average of 17 rules per questionnaire taking 28 minutes. We then run TAPFixer to detect and repair interaction vulnerabilities in collected rules. We randomly select found property violations and manually analyze their vulnerability patterns that cause property violations as shown in Table \ref{usertable}. Overall, we analyze 129 flawed rules and TAPFixer achieves an RSR of $94.5\%$ in both basic and expanded vulnerability patterns.

Our findings from this study are as follows:
(1) the number of V4 is noticeably more than other identified vulnerabilities. Participants tend to set switches for temperature-related actuators to two separate TAP rules. This results in a vulnerability of V4 among the window, AC, heater, and electric blanket;
(2) although the majority of participants have used or learned about home automation, they rarely use conditions of TAP rules and set the delay for device executions, which affect the incidence of condition interference and latency-related issues to some extent;
(3) no V5 vulnerability is found since it mainly originates from platform delays and users' careless misconfiguration, which do not occur in the study.

\revise{To evaluate the quality of TAPFixer’s results, we randomly selected 9 property violations with their corresponding TAP rules, original rule descriptions, and generated rule patches to construct a questionnaire to feedback to 23 participants. The average completion time of this questionnaire is 7.3 minutes. We received a total of 103 strong agreements, 92 agreements, and 12 disagreements (3 in V1, 1 in V2, 2 in V3, 1 in V4, 1 in V7, and 4 in V8). The reasons for disagreements can be classified into two types: 1) negating violations; 2) proposing other patches, some of which we found are not able to fix the violation or are similar to patches. Hence, we further explained the results in detail to participants who disagreed and received 10 agreements again. So, TAPFixer achieved a 99.0\% satisfaction rate. The remaining disagreeing participants suggested adding restrictions on rules' lifespans in the patch (e.g., restricting $r_4$ in Fig.\ref{fig:rin}(b) to only run in the summer), which is feasible but not supported by TAPFixer yet.}

\vspace{-2mm}
\subsection{Performance Analysis}

\revise{Since the complexity of TAPFixer depends on input models, properties, and abstraction and refinement strategies, it is hard to directly quantify. Hence,} we evaluate the performance of the overall process in TAPFixer, including rule interaction modeling, vulnerability detection, and repair patch generation. To early terminate the oversized predicate exploration, we set the $ROUND\_LIMIT$ and $ITER\_LIMIT$ in NPR to 15 and 50, respectively. Within these two iterations, we record the execution time from two perspectives: the benchmark dataset in $\S$\ref{casestudy} and the realistic market dataset in $\S$\ref{marketapps}.

We first record TAPFixer's analysis time of each test case. To form the 21-rule test case, for each benchmark dataset (\emph{Group 1-5}), we combine rules presented in Table \ref{flawed} with randomly selected ones from the extracted rule set. TAPFixer does not complement initialization scenarios without rules (\emph{N/A 1-2}). Fig. \ref{fig:pic1} shows the time for verifying and repairing vulnerabilities in all 7 test cases. For 21-rule test cases (\emph{Group 1-5}), the longest, shortest, average time takes 267.3 seconds in \emph{Group 5}, 93.8 seconds in \emph{Group 3}, and 161.7 seconds respectively. The time of the initialization scenario \emph{N/A 1} and \emph{N/A 2} takes 241 and 215 ms respectively.

We then record TAPFixer's total time of evaluations on market apps in Table \ref{table:average}. Although the verification and repair time of \emph{G1-G7} takes a total of one hour longer than that of \emph{G8-G23} which takes around 6 hours, their average time to create a property violation patch takes about 3 seconds with an overall average of 3.51 seconds. According to the investigation on the efficiency of manual repair conducted by a previous study\cite{liang2016systematically}, many users give up debugging a property violation after 45 seconds. Hence, TAPFixer achieves over 15 times improvement compared to manually repairing a property violation.

\subsection{Discussion and Limitations}
\label{subsec:discussion}

\textbf{Scalability and Extension}. The proposed techniques in NPR are not limited to any specific platform or scenario in HA systems. They can be applied in other cases where TAP rules are widely used, such as UAV control, hardware description rules, industrial equipment control, etc.
In addition to these studied safety properties (\textit{i.e.}, something bad should never happen), we aim to introduce liveness properties (\textit{i.e.}, something good eventually happens) to address latency-related issues in more detail in the future.

\begin{figure}[t!]
\centering
\includegraphics[width=0.9\columnwidth]{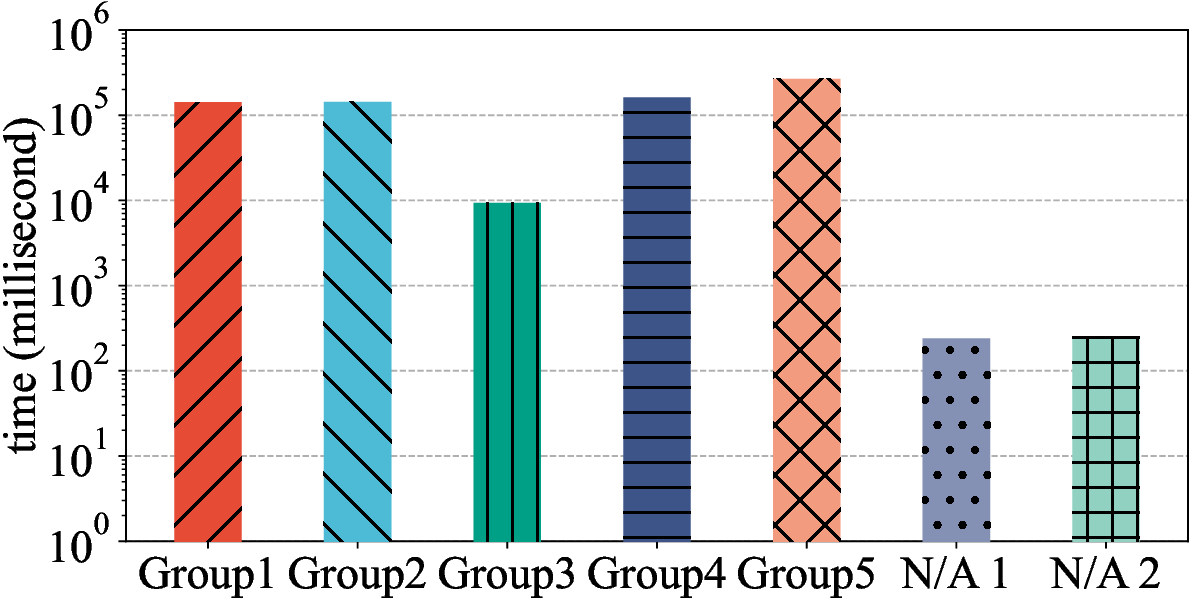}
\vspace{1mm}
\caption{Verification and repair time of each 21-rule benchmark dataset and initialization scenarios.}
\label{fig:pic1}
\vspace{-4mm}
\end{figure}

\definecolor{MineShaft}{rgb}{0.2,0.2,0.2}
\begin{table} [t!]
\centering
\caption{Average patch generation time for market apps.}
\label{table:average}
\resizebox{\linewidth}{!}
{
\begin{tabular}{|C{4cm}|C{1.5cm}|C{2.5cm}|C{3.5cm}|}
\hline
   Market apps  &  Total time (minute) & Number of generated patches & Avg. generation time per patch (second)\\\hline
   \textbf{G1-G7} (110 rule groups)                                      & 364.070              & 6837                  & 3.195                                   \\\hline
    \textbf{G8-G23} (110 rule groups)                                     & 431.261              & 6749                  & 3.834 \\\hline
\end{tabular}
}
\end{table}

\textbf{Limitation}. TAPFixer achieves high effectiveness and speed in verifying and repairing vulnerabilities in complex logical and physical spaces, but it still suffers from three limitations: First, although TAPFixer can achieve an RSR of $86.65\%$ overall, three vulnerability cases cannot be fixed: (1) to address predicate-solving errors, we find that such failures can be avoided by running NPR again or setting the abstraction target to a single rule instead of the rule collection; (2) to address limitations during the semantics abstraction of new rules ($\S$\ref{rule semantics abstract}), future work can extend TAPFixer with numerical condition abstraction to address this limitation; (3) possible solutions to avoid the predicate scale timeout are to increase $ITER\_LIMIT$, $ROUND\_LIMIT$ and further optimize the predicate scale. Second, TAPFixer relies on our manual qualitative analysis of device effects on physical channels to enhance the comprehensiveness of its modeling capabilities. However, the accuracy of effect models is hard to guarantee as the home environment changes, which limits the performance of TAPFixer. But, fortunately, the home environment will not change frequently once devices are deployed in. In the future, we can introduce an online miner to regularly update effect models, which collects sensor data over a long period and uses SVM to mine the qualitative relationships between channels and device activities, similar to \cite{birnbach2019peeves}. Third, TAPFixer currently can only provide repair patches. In the future, more reverse engineering techniques should be exploited to generate program patches of rule patches confirmed by users for specific HA platforms. Take SmartThings for example, we can first map the patch into the corresponding inter-procedural control flow graph extracted from apps during rule modeling (see Appendix \ref{asec:mie}) and then transform it into the form of the Groovy abstract syntax tree which can be easily transformed into Groovy codes using Groovy AST Transformation \cite{asttran}.

\section{Related work}
\label{sec:rel}
With the prevalence of home automation, safety issues in the TAP-based HA system have gained much attention. \revise{Table \ref{tab:detect-work} and Table \ref{tab:compare} respectively illustrate the comparison of our research with related detection and repair work over various features}. We discuss this comparison as follows:

\begin{table}[]
\caption{ \revise{Comparison between TAPFixer and related detection works.}}
\label{tab:detect-work}
\resizebox{\linewidth}{!}{
\begin{tabular}{|c|c|c|c|c|c|c|}
\hline
Related work & Latency  & Tardy attribute & Implicit effect & Joint physical effect & Nondeterminacy \\ \hline
IOTCOM\cite{Dillig_Dillig_Aiken_2008}       & \ding{52}  & \ding{56} & \ding{52} & \ding{56} & \ding{56} \\ \hline
SAFECHAIN\cite{hsu2019safechain}    & \ding{56}  & \ding{56} & \ding{56} & \ding{56} & \ding{56} \\ \hline
IRuler\cite{wang2019charting}      &  \ding{52}  & \ding{56}  & \ding{56} & \ding{56} & \ding{56} \\ \hline
IoTGUARD\cite{celik2019iotguard}     & \ding{56}  & \ding{56} & \ding{56} & \ding{56} & \ding{56} \\ \hline
SOATERIA\cite{Celik_McDaniel_Tan_2018}     & \ding{56}  & \ding{56} & \ding{56} & \ding{56} & \ding{56} \\ \hline
TAPInspector\cite{yu2022tapinspector} & \ding{52}  & \ding{52} & \ding{52} & \ding{56} & \ding{56} \\ \hline
IOTSAN\cite{nguyen2018iotsan}       &  \ding{56}  & \ding{56} & \ding{56} & \ding{56} & \ding{56} \\ \hline
HOMEGUARD\cite{chi2020cross}    &  \ding{52}  & \ding{56} & \ding{56} & \ding{56} & \ding{56} \\ \hline
Jia et al.\cite{jia2021s}   &  \ding{56}  & \ding{56} & \ding{56}& \ding{56} & \ding{56} \\ \hline
IoTSEER\cite{ozmen2022discovering}      &  \ding{52}  & \ding{52}  & \ding{52} & \ding{52} & \ding{56} \\ \hline
Chi et al.\cite{chi2022delay}   &  \ding{52}  & \ding{56} & \ding{56} & \ding{56} & \ding{56} \\ \hline
TAPFixer     & \ding{52}  & \ding{52} & \ding{52} & \ding{52} & \ding{52} \\ \hline
\end{tabular}
}
\vspace{-4mm}
\end{table}

\begin{table}[t!]
\caption{Comparison between TAPFixer and related repair works. $phy_1$, $phy_2$, $phy_3$, $phy_4$, and $phy_5$ denote tardy attribute, channel-based interaction, implicit physical effect, joint physical effect, nondeterminacy, respectively.}
\label{tab:compare}
\resizebox{\linewidth}{!}{
\begin{tabular}{|c|ccc|ccccc|}
\hline
\multirow{2}{*}{\begin{tabular}[c]{@{}c@{}}Related work\end{tabular}} &
  \multicolumn{3}{c|}{Latency-realted} &
  \multicolumn{5}{c|}{Physical-related} \\ \cline{2-9}
 &
  \multicolumn{1}{c|}{$l_1$} &
  \multicolumn{1}{c|}{$l_2$} &
  $l_3$ &
  \multicolumn{1}{c|}{$phy_1$} &
  \multicolumn{1}{c|}{$phy_2$} &
  \multicolumn{1}{c|}{$phy_3$} &
  \multicolumn{1}{c|}{$phy_4$} &
  $phy_5$ \\ \hline
ESOs\cite{schuster2018situational} &
  \multicolumn{1}{c|}{\ding{52}} &
  \multicolumn{1}{c|}{\ding{52}} & \ding{52}
   &
  \multicolumn{1}{c|}{\ding{52}} &
  \multicolumn{1}{c|}{\ding{52}} &
  \multicolumn{1}{c|}{\ding{56}} &
  \multicolumn{1}{c|}{\ding{56}} & \ding{52}
   \\ \hline
Bastys et al.\cite{bastys2018if} &
  \multicolumn{1}{c|}{\ding{52}} &
  \multicolumn{1}{c|}{\ding{52}} &\ding{52}
   &
  \multicolumn{1}{c|}{\ding{52}} &
  \multicolumn{1}{c|}{\ding{56}} &
  \multicolumn{1}{c|}{\ding{56}} &
  \multicolumn{1}{c|}{\ding{56}} & \ding{52}
   \\ \hline
He et al.\cite{he2018rethinking} &
  \multicolumn{1}{c|}{\ding{52}} &
  \multicolumn{1}{c|}{\ding{56}} &\ding{56}
   &
  \multicolumn{1}{c|}{\ding{52}} &
  \multicolumn{1}{c|}{\ding{56}} &
  \multicolumn{1}{c|}{\ding{56}} &
  \multicolumn{1}{c|}{\ding{56}} & \ding{52}
   \\ \hline
SmartAuth\cite{tian2017smartauth} &
  \multicolumn{1}{c|}{\ding{52}} &
  \multicolumn{1}{c|}{\ding{52}} &\ding{52}
   &
  \multicolumn{1}{c|}{\ding{52}} &
  \multicolumn{1}{c|}{\ding{52}} &
  \multicolumn{1}{c|}{\ding{56}} &
  \multicolumn{1}{c|}{\ding{56}} & \ding{52}
   \\ \hline
ContexIoT\cite{jia2017contexlot} &
  \multicolumn{1}{c|}{\ding{52}} &
  \multicolumn{1}{c|}{\ding{52}} & \ding{52}
   &
  \multicolumn{1}{c|}{\ding{52}} &
  \multicolumn{1}{c|}{\ding{52}} &
  \multicolumn{1}{c|}{\ding{56}} &
  \multicolumn{1}{c|}{\ding{56}} & \ding{52}
   \\ \hline
IoTSAFE\cite{ding2021iotsafe}&
  \multicolumn{1}{c|}{\ding{52}} &
  \multicolumn{1}{c|}{\ding{52}} &\ding{56}
   &
  \multicolumn{1}{c|}{\ding{52}} &
  \multicolumn{1}{c|}{\ding{52}} &
  \multicolumn{1}{c|}{\ding{52}} &
  \multicolumn{1}{c|}{\ding{52}} & \ding{52}
   \\ \hline
IoTMEDIATOR\cite{Chi_Zeng_Du}&
  \multicolumn{1}{c|}{\ding{52}} &
  \multicolumn{1}{c|}{\ding{56}} &\ding{56}
   &
  \multicolumn{1}{c|}{\ding{52}} &
  \multicolumn{1}{c|}{\ding{52}} &
  \multicolumn{1}{c|}{\ding{56}} &
  \multicolumn{1}{c|}{\ding{56}} & \ding{52}
   \\ \hline
Liang et al.\cite{liang2016systematically} &
  \multicolumn{1}{c|}{\ding{52}} &
  \multicolumn{1}{c|}{\ding{56}} &\ding{52}
   &
  \multicolumn{1}{c|}{\ding{52}} &
  \multicolumn{1}{c|}{\ding{56}} &
  \multicolumn{1}{c|}{\ding{56}} &
  \multicolumn{1}{c|}{\ding{56}} &
   \ding{56} \\ \hline
MenShen\cite{bu2018systematically} &
  \multicolumn{1}{c|}{\ding{52}} &
  \multicolumn{1}{c|}{\ding{56}} &\ding{52}
   &
  \multicolumn{1}{c|}{\ding{52}} &
  \multicolumn{1}{c|}{\ding{56}} &
  \multicolumn{1}{c|}{\ding{56}} &
  \multicolumn{1}{c|}{\ding{56}} &\ding{56}\\ \hline
AutoTap\cite{zhang2019autotap} &
  \multicolumn{1}{c|}{\ding{52}} &
  \multicolumn{1}{c|}{\ding{56}} &\ding{56}
   &
  \multicolumn{1}{c|}{\ding{56}} &
  \multicolumn{1}{c|}{\ding{56}} &
  \multicolumn{1}{c|}{\ding{56}} &
  \multicolumn{1}{c|}{\ding{56}} &\ding{56}
   \\ \hline
TAPFixer &
  \multicolumn{1}{c|}{\ding{52}} &
  \multicolumn{1}{c|}{\ding{52}} &
  \ding{52} &
  \multicolumn{1}{c|}{\ding{52}} &
  \multicolumn{1}{c|}{\ding{52}} &
  \multicolumn{1}{c|}{\ding{52}} &
  \multicolumn{1}{c|}{\ding{52}} &
  \ding{52} \\ \hline
\end{tabular}
}
\vspace{1mm}
\end{table}

\textbf{Rule Vulnerability Detection}. Detecting vulnerabilities in more comprehensive and detailed environments has been widely studied for the past years. SOTERIA\cite{Celik_McDaniel_Tan_2018} proposes a detection method based on LTL model checking. IOTSAN\cite{nguyen2018iotsan} analyses sequentiality and concurrency of rule executions. \revise{IOTA\cite{Newcomb_Chandra_Jeannin_Schlesinger_Sridharan_2017} proposes the calculus for the domain of home automation to secure rule interactions. Balliu et al.\cite{Balliu_Merro_Pasqua_2019} proposes a semantic framework capturing the essence of cross-app interactions and Friendly Fire\cite{balliu2021friendly} further optimises it. SAFECHAIN\cite{hsu2019safechain} detects hidden attack chains exploiting the combination of rules based on model checking.} IOTCOM\cite{alhanahnah2020scalable} focuses on vulnerabilities between logical and physical interactions. HOMEGUARD\cite{chi2020cross} uses SMT techniques to detect cross-app conflicts. IRuler\cite{wang2019charting} considers the uncertainty of smart devices. Jia et al.\cite{jia2021s} handle scenarios about device management channels. Chi et al.\cite{chi2022delay} studies vulnerabilities caused by delay-based automation interference attacks. TAPInspector\cite{yu2022tapinspector} implements a comprehensive analysis of rule interactions by introducing latency and connection-based features. IoTSEER\cite{ozmen2022discovering} alleviates approximation problems through dynamic analysis.
 \revise{TAPFixer follows these advanced methods to model and detect rules. But, we introduce more features to improve the accuracy and integrity of static modeling shown in Table \ref{tab:detect-work} so that NPR can statically generate formal-soundness patches using formal verification.}

\textbf{Dynamic Rule Enforcement Control}. Existing works develop control policies based on specific concerns in the HA system and dynamically prevent risks.
ContexIoT\cite{jia2017contexlot} uses the data and control flows of smart apps to build access contexts.
SmartAuth\cite{tian2017smartauth} investigates authorization mechanisms with different behavioral security levels.
Bastys et al.\cite{bastys2018if} design a short-term and a long-term access control mechanism based on the information flow.
He et al.\cite{he2018rethinking} point out the key factors that constitute the complex access control scenario.
ESOs\cite{schuster2018situational} studies cross-layer access control and provides corresponding permissions.
IoTSAFE\cite{ding2021iotsafe} identifies real-time physical interactions combined with dynamic and static methods to predict and avoid hazard scenarios. IoTMEDIATOR\cite{Chi_Zeng_Du} provides a more fine-grained access control framework through threat-tailored handling. \revise{The comparison between TAPFixer and advanced dynamic-based repair works is shown in Table \ref{tab:compare}.} Compared to these methods, TAPFixer can eliminate the root cause of vulnerabilities.

 \textbf{Static Fixing Towards Rule Semantic}.
Researches on static fixing include Liang et al.\cite{liang2016systematically}, MenShen\cite{bu2018systematically}, AutoTap\cite{zhang2019autotap}, and also our work TAPFixer. AutoTap\cite{zhang2019autotap} proposes an automaton-based automatic method to fix vulnerabilities, which identifies the violated bridge edge and generates fixes based on it. Liang et al.\cite{liang2016systematically} and MenShen\cite{bu2018systematically} develop semi-automatic formal methods to fix vulnerabilities. They parameterize the syntax of existing rules and solve for specific values that can eliminate the violation.
However, none of them perform detailed physical and latency analysis, missing many practical features and vulnerabilities. Additionally,
their fixing algorithms are limited in addressing expanded vulnerabilities associated with complex physical and latency issues. \revise{The comparison between TAPFixer and advanced static-based repair works is also shown in Table \ref{tab:compare}.} Compared to these methods, TAPFixer can effectively repair vulnerabilities in complex physical environments, especially when dealing with expanded vulnerabilities.

\vspace{-2mm}
\section{Conclusion}
In this work, we design a novel automatic vulnerability detection and repair framework, TAPFixer, for TAP-based home automation systems. With a comprehensive analysis of existing rule interaction vulnerabilities, TAPFixer can model TAP rules with more practical latency and physical features to capture the accurate rule execution behaviors both in the logical and physical space and identify more interaction vulnerabilities. We propose a novel negated-property reasoning algorithm for TAPFixer so that it is able to accurately generate valid patches for eliminating vulnerabilities both in the logical and physical space. We conduct numerical evaluations of TAPFixer from aspects of accuracy analysis, repair capabilities of market apps, real user study, and execution performance. The results of our evaluation show that TAPFixer can correctly fix different rule interaction vulnerabilities with an excellent performance overhead.

\section*{Acknowledgment}
We thank our shepherd and the anonymous reviewers for their valuable feedback. This work was partially supported by the National Natural Science Foundation of China under Grant 62202387 and the Guangdong Basic and Applied Basic Research Foundation under Grant 2021A1515110279.

\bibliographystyle{plain}
\bibliography{main}

\appendix
\section{Rule Interaction Vulnerability Patterns}
\label{asec:rivp}
TAPFixer integrates more features in physical space and uncovers 8 patterns of rule interaction vulnerability which are summarized in Fig. \ref{fig:vul}.

\textbf{V1:}  \textit{Trigger-Interference Basic Pattern.} $a_i$ of $r_i$ and $t_j$ of $r_j$ share the same immediate channel attribute ($a_{i} \cap t_{j} \cap \mathbbm{A}_{immd} \ne \emptyset$), which can cause events generated by $a_i$ unexpectedly triggering $r_j$ ($a_{i}\dashrightarrow t_{j}$) and put the rule execution at risks.

\textbf{V2:} \textit{Condition-Interference Basic Pattern.}
$a_i$ of $r_i$ and $c_j$ of $r_j$ share the same immediate channel attribute ($a_{i} \cap c_{j} \cap \mathbbm{A}_{immd} \ne \emptyset$), which may lead to $a_i$ changing the satisfaction of $c_j$ ($a_{i}\Rightarrow c_{j}$) and change the rule context defectively.

\textbf{V3:} \textit{Action-Interference Basic Pattern.} $r_i$ and $r_j$ are both immediate rules ($r_{i}, r_{j} \in r_{immd}$) and have no latency ($ l_{1} \notin r_{i}\cup r_{j}$). Commands from $a_i$ and $a_j$ to the same device are conflicting ($a_{i} \xrightarrow{\times} a_{j}$) and may override the effect after the secure interaction.

\textbf{V4:} \textit{Tardy-channel-based Trigger
Interference.} Similar to V1, $a_i$ can unexpectedly trigger $t_j$ sharing the same physical channel ($a_{i}\dashrightarrow t_{j}$), but from the tardy channel ($a_{i} \cap t_{j} \cap \mathbbm{A}_{tardy} \ne\emptyset$). \textit{Group1} in Table \ref{tab:frg} shows an example of V4.

\textbf{V5:} \textit{Disordered
Action Scheduling}.
$r_i$ and $r_j$ are both immediate ($r_{i}, r_{j} \in r_{immd}$) and have conflicting actions ($a_{i} \xrightarrow{\times} a_{j}$), but include $l_1$ compared to V3 ($l_{1} \in r_{i}\cup r_{j}$).
$r_i$ and $r_j$ will be triggered simultaneously ($t_{i} \cap t_{j} \ne\emptyset$),
but with the involvement of $l_i$ and $l_j$ (if it exists), the expected order of actions is disrupted. \textit{Group2} in Table \ref{tab:frg} shows an example of V5.

\textbf{V6:} \textit{Action Overriding.}
This type of vulnerability is similar to V5, but $r_i$ and $r_j$ can be triggered separately ($t_{i} \cap t_{j} =\emptyset$). The execution time $l_i$ is longer than $l_j$ (if it exists), causing $a_i$ to overwrite the effect of the previous execution of $a_j$ ($a_{i} \xrightarrow{\times} a_{j}$).
\textit{Group3} in Table \ref{tab:frg} shows an example of V6.

\begin{figure}[t]
\begin{center}
\setlength{\abovecaptionskip}{0cm}
\setlength{\belowcaptionskip}{0cm}
\begin{minipage}{1\columnwidth}
  \centering
    \subfloat[\textbf{V1}: Trigger-Interference\\ Basic Pattern.]{\includegraphics[width=0.49\columnwidth]{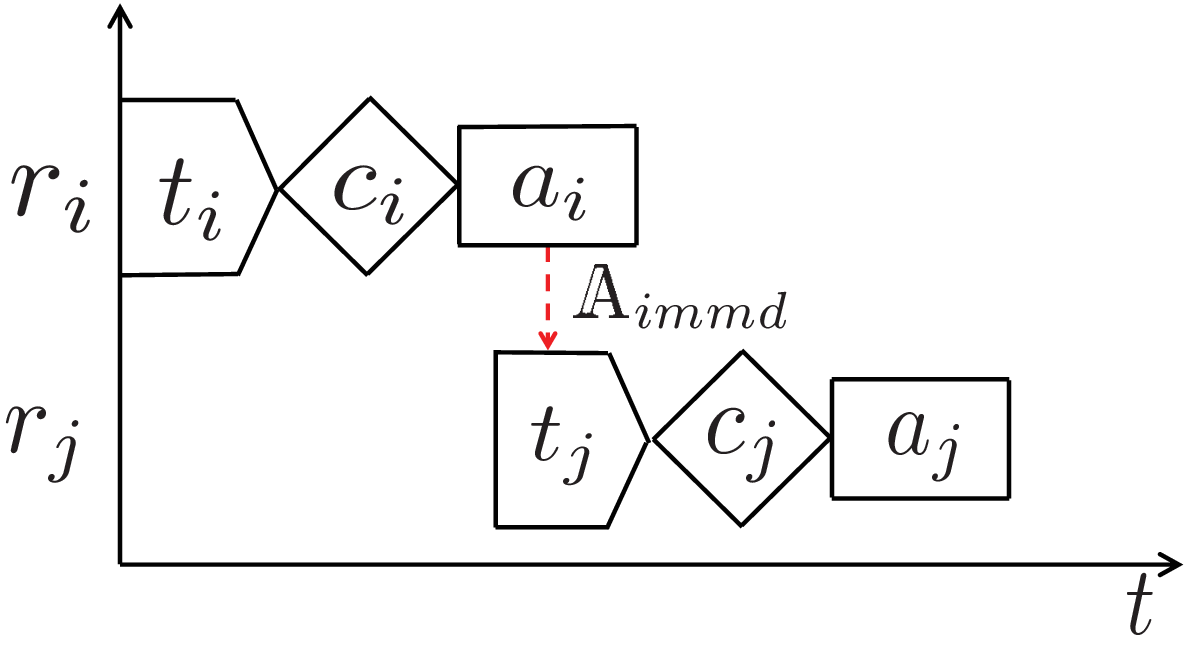}}
    \subfloat[\textbf{V2}: Condition-Interference\\ Basic Pattern.]{\includegraphics[width=0.49\columnwidth]{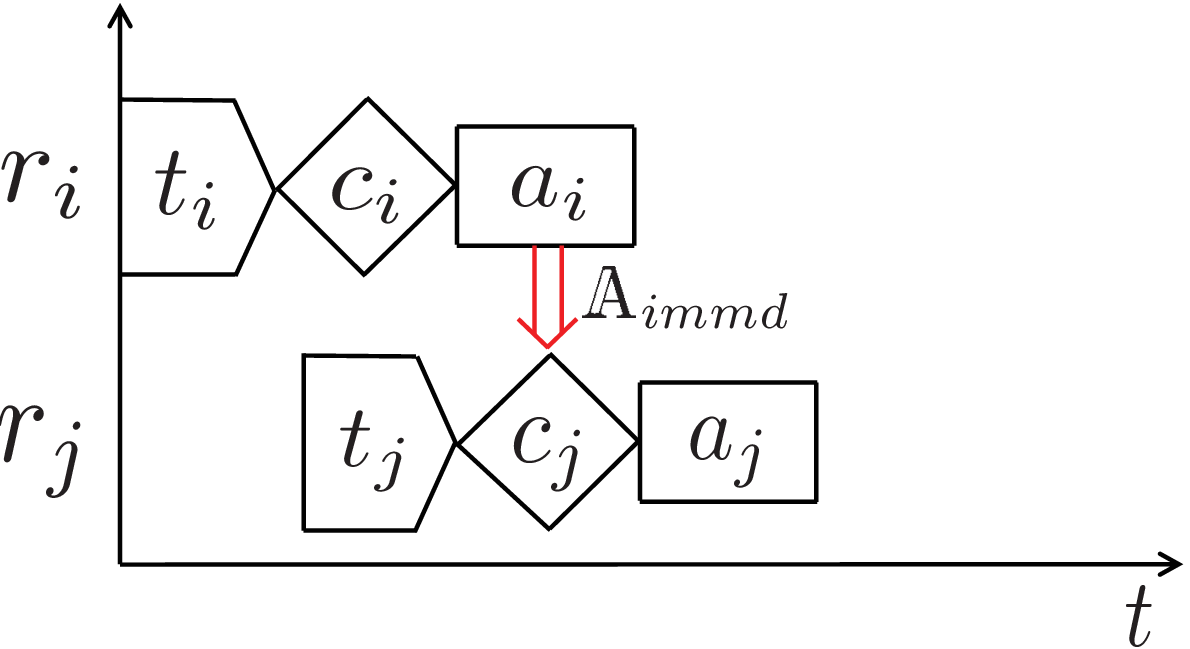}}
\end{minipage}

\begin{minipage}{1\columnwidth}
  \centering
    \subfloat[\textbf{V3}: Action-Interference\\ Basic Pattern.]{\includegraphics[width=0.49\columnwidth]{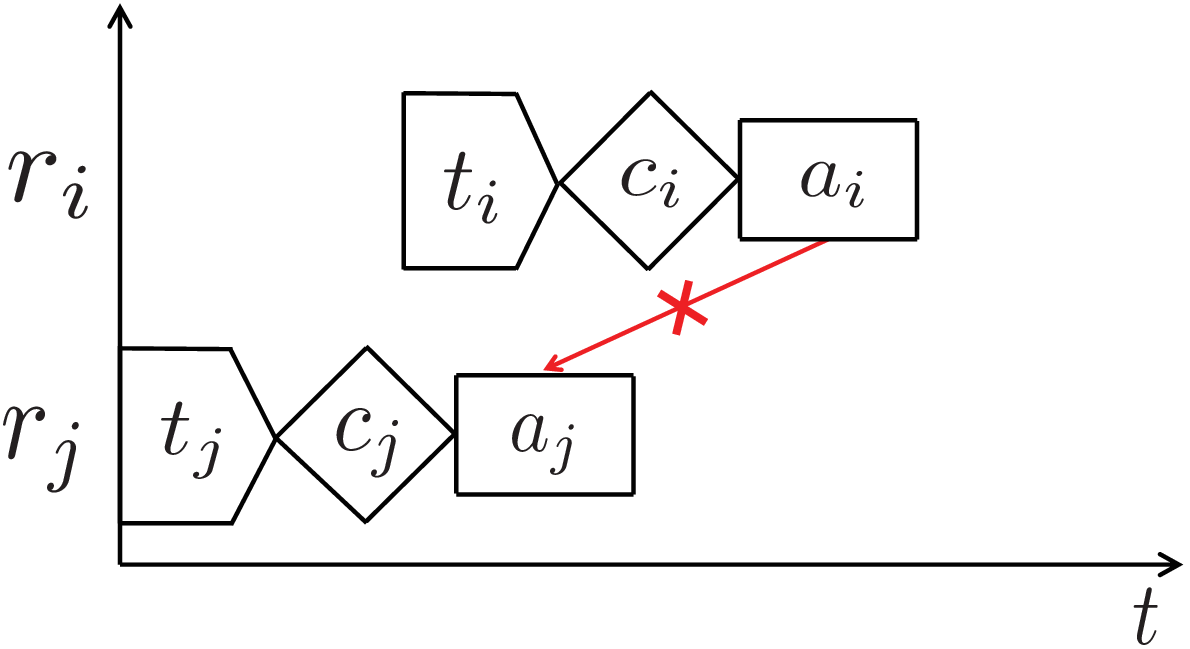}}
    \subfloat[\textbf{V4}: Tardy-channel-based\\ Trigger Interference.]{\includegraphics[width=0.49\columnwidth]{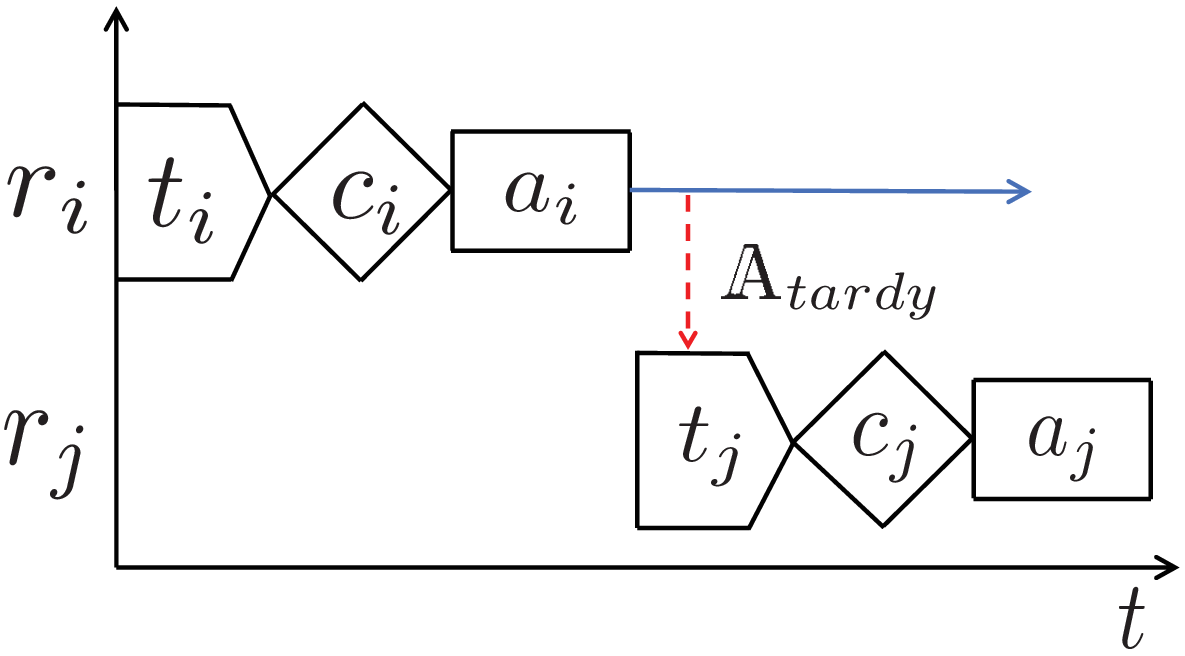}}
\end{minipage}

\begin{minipage}{1\columnwidth}
  \centering
    \subfloat[\textbf{V5}: Disordered Action\\ Scheduling.]{\includegraphics[width=0.49\columnwidth]{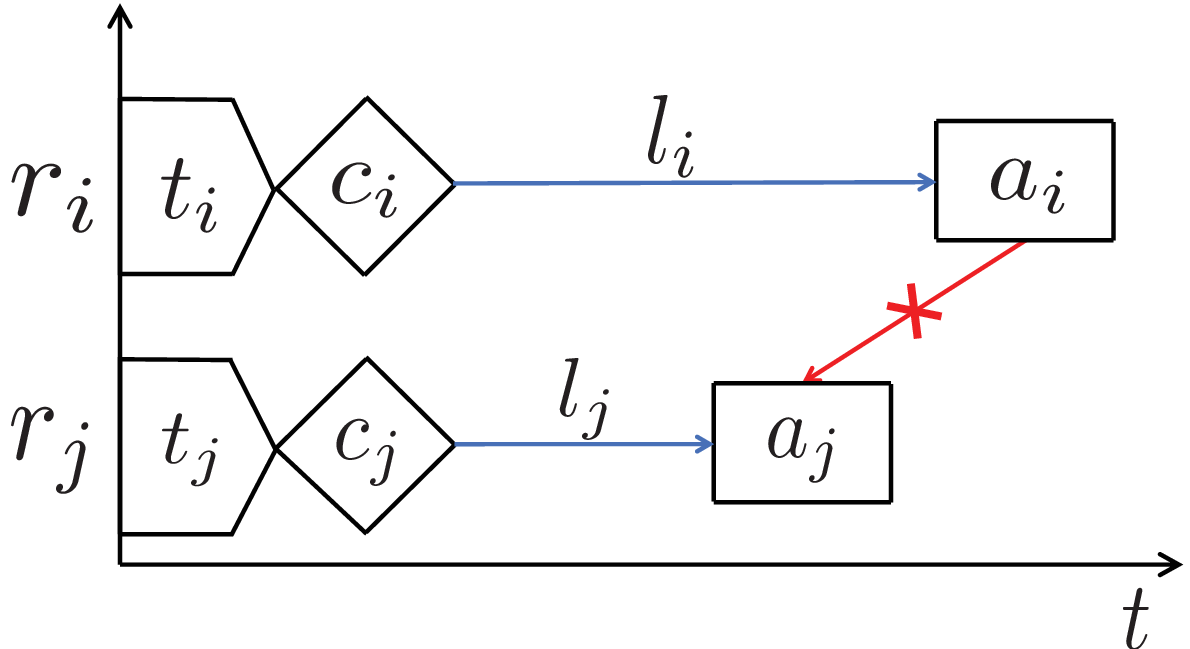}}
    \subfloat[\textbf{V6}: Action Overriding.]{\includegraphics[width=0.49\columnwidth]{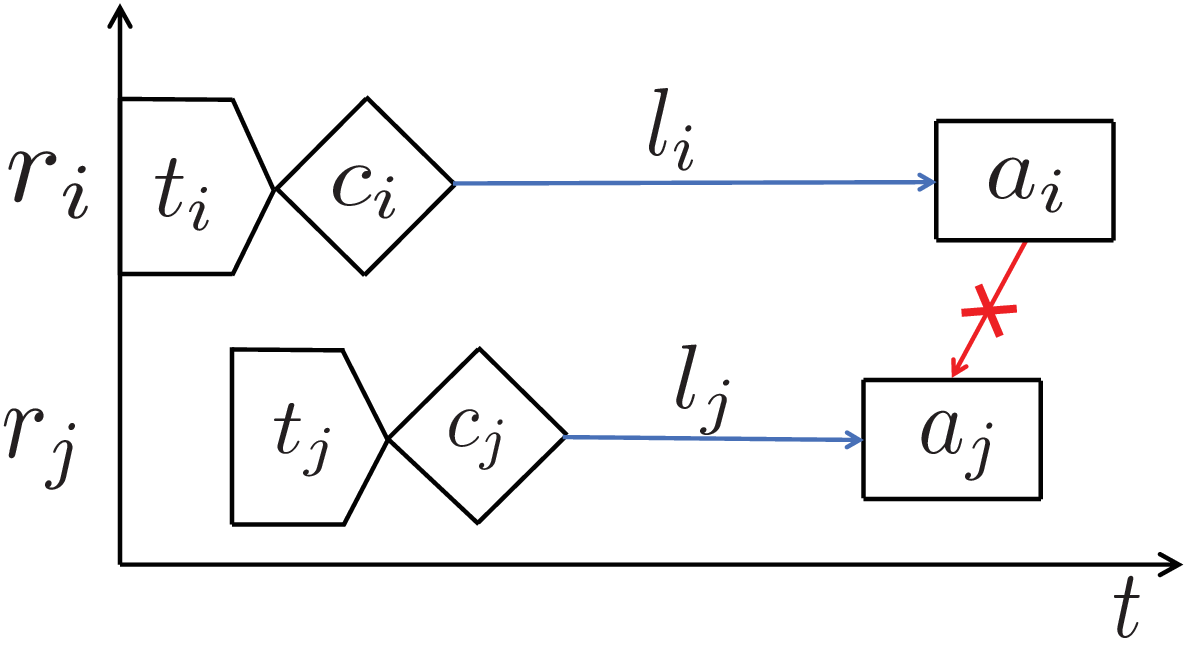}}
\end{minipage}

\begin{minipage}{1\columnwidth}
  \centering
    \subfloat[\textbf{V7}: Action Breaking.]{\includegraphics[width=0.49\columnwidth]{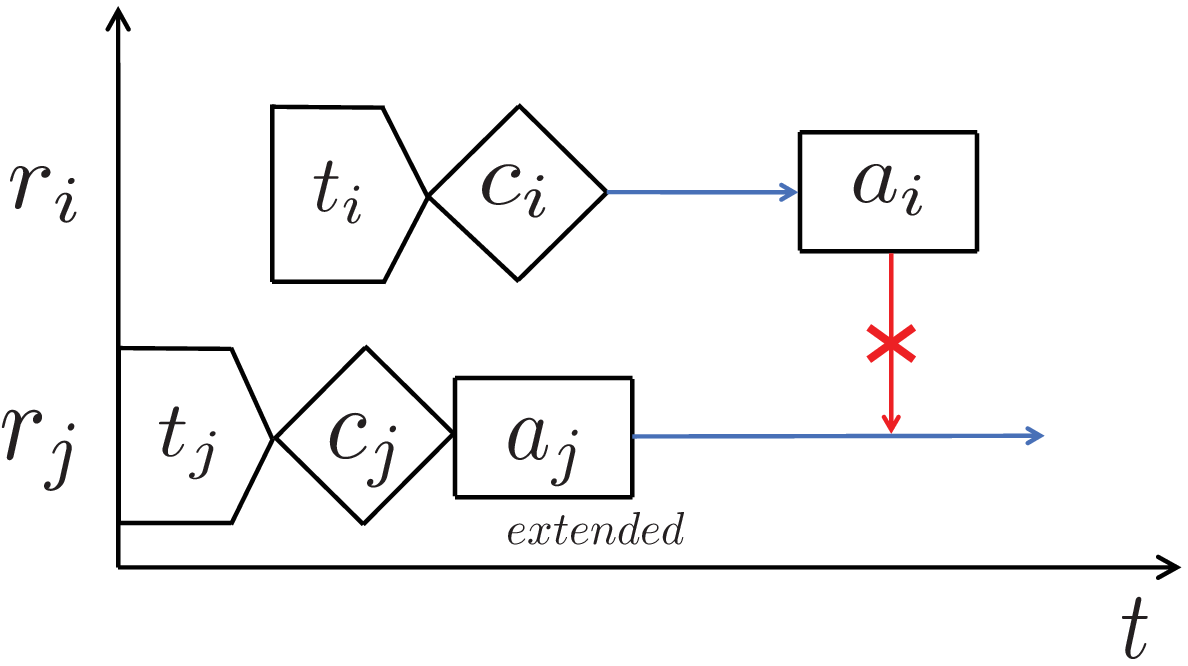}}
    \subfloat[\textbf{V8}: Tardy-channel-based\\ Condition Interference.]{\includegraphics[width=0.49\columnwidth]{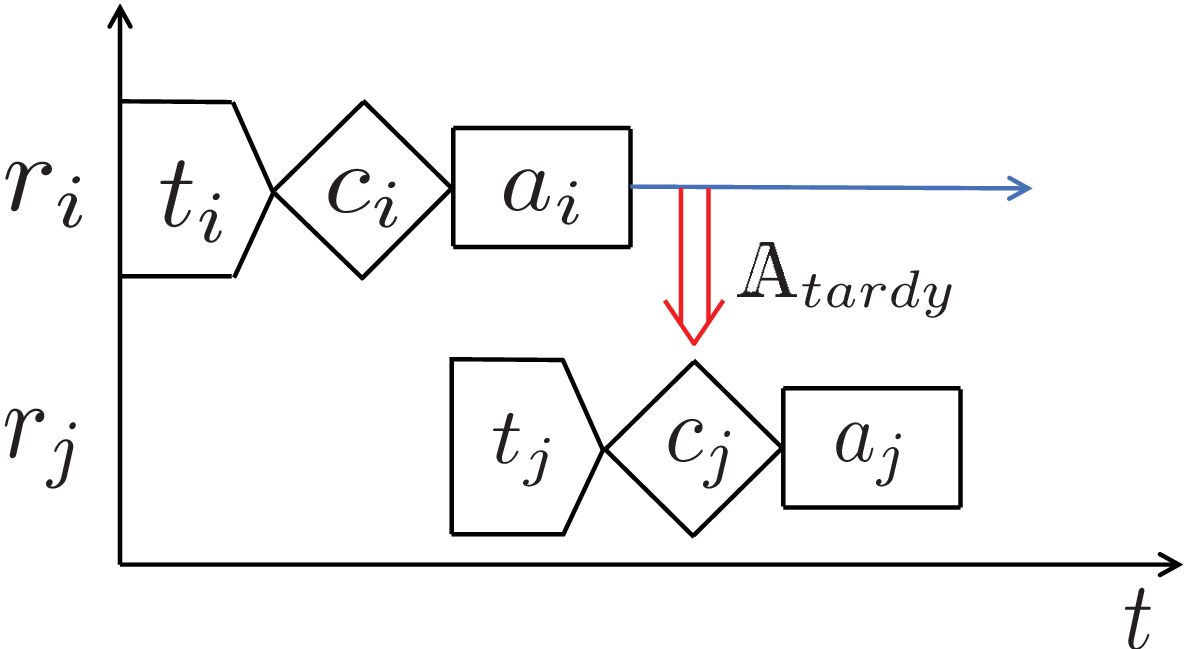}}
\end{minipage}
\vspace{1mm}
\caption{Rule interaction vulnerability patterns.}
\vspace{-2mm}
\label{fig:vul}
\end{center}
\end{figure}

\textbf{V7:} \textit{Action
Breaking.} Different from V3 where both rules are immediate, $r_i$ and $r_j$ contain at least one extended rule ($r_{i}\in r_{ext} \lor r_{j}\in r_{ext}$).
Assume $r_j$ is the extended rule, during its execution, the state of the device in $a_j$ will be stayed for a while. $r_i$ with different latency preference will complete $a_i$ before $r_j$, interrupting the process of extended $a_j$ ($a_{i} \xrightarrow{\times} a_{j}$). \textit{Group4} in Table \ref{tab:frg} shows an example of V7.

\textbf{V8:} \textit{Tardy-channel-based Condition Interference.} Similar to V2, $a_i$ can
change the satisfaction of $c_j$ sharing the same physical channel ($a_{i}\Rightarrow c_{j}$), but from the tardy channel ($a_{i} \cap c_{j} \cap \mathbbm{A}_{tardy} \ne\emptyset$). \textit{Group5} in Table \ref{tab:frg} shows an example of V8.

\section{Model Information Extraction}
\label{asec:mie}

Depending on existing efforts \cite{alhanahnah2020scalable, yu2022tapinspector}, TAPFixer first extracts inter-procedural control flow graphs (ICFGs) from HA apps and then formalizes ICFGs as TAP rules. It utilizes static program analysis for SmartThings apps written in Groovy and NLP techniques for IFTTT applets.

To extract ICFGs from SmartApps, TAPFixer implements an Abstract Syntax Tree (AST) parser upon the Groovy compiler to perform path-sensitive analyses\cite{Dillig_Dillig_Aiken_2008} on AST nodes.For the closed-source SmartThings API during path analysis, TAPFixer reviews the SmartThings API documentation\cite{DeviceCapabilitiesReference} to manually model the APIs in a manner similar to\cite{Corno_DeRussis_MongeRoffarello_2019}. With the AST parser and the handling of closed-source features, ICFGs can be extracted accurately. To convert IFTTT applets into ICFGs, TAPFixer utilizes NLP methods \cite{wang2019charting, Ding_Hu_2018} for string analysis similar to\cite{alhanahnah2020scalable, celik2019iotguard}. It decomposes the applet text into tokens and searches for trigger, condition, and action titles from them. With extracted titles, TAPFixer constructs an ICFG with two nodes, where the input node represents the trigger and condition while the other node represents the action. \revise{Note that conditions in IFTTT can also be specified using the filter code in the form of JavaScript snippets, which requires hard manual effects \cite{celik2019iotguard} and is not publicly available. Hence, we do not analyze the conditions specified in them like \cite{wang2019charting, yu2022tapinspector}. Using runtime techniques \cite{bastys2018if, celik2019iotguard} is a possible solution to improve the capability of TAPFixer in the future.}

To construct TAP rules from ICFGs, TAPFixer first reformulates device capabilities into JSON format. It then provides a crawler, based on Selenium\cite{Selenium}, for users to extract rule configurations not included in source codes. With the obtained information, TAPFixer extracts valid paths from ICFG and converts them into TAP rules similar to \cite{yu2022tapinspector}. It extracts trigger, condition, and action sets from different domains based on the path location in the app (app initialization, app schedule, and event handling). TAPFixer also extracts device areas, rule semantics, and rule configurations from extracted rule information.

\section{Quantification of Physical Channel Interactions}

\revise{Based on physical interactions studied in existing literature \cite{birnbach2019peeves, chi2022delay,chi2020cross,Ding_Hu_2018,ozmen2022discovering,ding2021iotsafe, wang2019charting,yu2022tapinspector},}
we totally model 9 physical channels as follows: temperature, illuminance, motion, smoke, humidity, CO, \ce{CO2}, sound, and weather status. We consider motion and weather status are not affected by actuator executions, while the rest of the physical channels can be affected by actuators. To quantify physical effects, \revise{we collect existing effort analysis results mainly from \cite{birnbach2019peeves, ding2021iotsafe, yu2022tapinspector} to conduct a qualitative analysis in Table \ref{tab:pc} by associating device effects with their affected physical channels. Besides, we also conduct a simple measurement to further improve the qualitative analysis that we deploy devices in our home to collect measurement results and manually identify these effects from measurement results. Considering that the effects will change as the environment changes, we set these quantitative effects to a value range rather than a fixed value and generate several versions of a rule model by randomly selecting values from the range. This can reduce model errors for different environments and improve the availability of detection and repair results. In the future, we can use an online monitor and machine-learning-based (e.g., SVM used in \cite{birnbach2019peeves}) methods to mine effort models of devices and regularly update them at runtime.}

\begin{table}[t!]
\centering
\caption{Configurations of interactions on physical channels.}\label{tab:pc}
\resizebox{\linewidth}{!}
{
\begin{tabular}{|c|c|c|}
\hline
Physical channel                 & Device action       & Physical effect                                                                                    \\ \hline
\multirow{6}{*}{Temperature}     & ACMode.heat         & Rise 1\textcelsius\, in 10-15min                                                                                \\ \cline{2-3}
                                 & ACMode.cool         & Drop 1\textcelsius\, in 10-15min                                                                                \\ \cline{2-3}
                                 & thermostatMode.heat & Rise 1\textcelsius\, in 15-20min                                                                                \\ \cline{2-3}
                                 & thermostatMode.cool & Drop 1\textcelsius\, in 15-20min                                                                                \\ \cline{2-3}
                                 & heater.on           & Rise 1\textcelsius\, in 10-15min                                                                                \\ \cline{2-3}
                                 & window.open         & \begin{tabular}[c]{@{}c@{}}Rise or drop 1\textcelsius\, in 10-15min\\ accroding to Temp difference\end{tabular} \\ \hline
\multirow{4}{*}{Humidity}        & sprinkler.on        & Rise 10\% in 10-15min                                                                              \\ \cline{2-3}
                                 & fan.on              & Drop 10\% in 15-20min                                                                              \\ \cline{2-3}
                                 & humidifier.on       & Rise 10\% in 10-15min                                                                              \\ \cline{2-3}
                                 & dehumidifier.on     & Drop 10\% in 15-20min                                                                              \\ \hline
\multirow{2}{*}{Smoke}           & waterVavle.on       & Clear smoke in 10-15min                                                                                                                                                 \\ \cline{2-3}
                                 & window.open         & Clear smoke in 20-25min                                                                            \\ \hline
\multirow{2}{*}{Carbon monoxide} & fan.on              & Clear CO in 15-20min                                                                               \\ \cline{2-3}
                                 & window.open         & Clear CO in 15-20min                                                                               \\ \hline
\multirow{2}{*}{Carbon dioxide}  & fan.on              & Decrease 1 level in 10-15min                                                                       \\ \cline{2-3}
                                 & window.open         & Decrease 1 level in 10-15min                                                                       \\ \hline
Sound                            & window.close        & Decrease 20 \emph{db}                                                                                       \\  \hline

\multirow{2}{*}{Illuminance}  & light.on              & Increase 100 \emph{lux}                                                                       \\ \cline{2-3}
                                 & light.off         & Decrease 100 \emph{lux}                                                                     \\ \hline
\end{tabular}
}
\end{table}

\section{Template Equivalence of Correctness Properties}
\label{asec:tesp}
Natural language templates of correctness properties can meet the majority of smart home scenarios that users expect. Based on whether properties are conditional and their states and/or events descriptions, there are 9 natural language templates\cite{zhang2019autotap}. These language templates can be summarized as two types of logical templates: \textit{event-based} and \textit{state-based}, shown in Table \ref{tab:lqs}.
The former focuses on identifying and handling exceptions timely, while the latter focuses on continuously preventing exceptions from occurring, which are often combined to ensure safety.
We use four equivalent relations to implement template conversion and prove the soundness of the translation equivalence as follows:

\begin{thm}
    The single-state correctness property is logically equivalent to the multi-state one since it is obviously a special case of multi-state properties where $n$ in $\land_{i=1}^n state_{i}$ is equal to 1.
\end{thm}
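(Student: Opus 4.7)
The plan is to proceed by direct reduction, since the claim is essentially a statement about the degenerate case of the indexed conjunction $\bigwedge_{i=1}^n state_i$. First I would write both templates side by side: the multi-state event-based template $G(\bigwedge_{i=1}^n state_i \Rightarrow X(event))$ and the single-state version $G(state_1 \Rightarrow X(event))$, and similarly for the state-based template. This makes the target equivalence explicit as a syntactic identity of LTL formulas rather than a semantic one that would require reasoning over traces of the automaton model $\mathcal{M}^{\phi}$.

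Next I would invoke the standard convention that an indexed conjunction over a singleton index set collapses to its unique conjunct, i.e., $\bigwedge_{i=1}^{1} state_i \equiv state_1$. Substituting $n=1$ into the multi-state template therefore yields precisely the single-state template, so the two formulas are syntactically identical (modulo the trivial big-$\wedge$ notation) and hence semantically equivalent under any LTL satisfaction relation $\mathcal{M}^{\phi} \models \cdot$. The same substitution argument applies symmetrically to the event-based and state-based rows of Table~\ref{tab:tspnp}, and to the negated-property templates, ensuring that the equivalence is preserved under the negation step used later by the NPR algorithm.

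The only subtlety worth flagging is ensuring that this ``special case'' framing is well-defined even when $n=1$: some readers might question whether templates enumerated in Appendix~\ref{asec:tesp} treat the single-state case as a grammatically distinct category rather than an instance of the multi-state one. I would address this by pointing out that pre-proposition composition in $\S$\ref{subsec:sppr} is defined uniformly over any finite $n \geq 1$, so no separate semantic rule is needed for $n=1$. Since no case analysis, induction, or model-checking machinery is invoked, there is no real obstacle here; the result is a one-line corollary of the indexed-conjunction convention and is stated explicitly to justify using only the multi-state template throughout the rest of the equivalence proofs in Appendix~\ref{asec:tesp}.
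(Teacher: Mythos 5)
Your proposal is correct and matches the paper's own justification, which is simply the inline remark that the single-state template is the $n=1$ instance of the multi-state one (the paper gives no separate proof environment for this theorem, unlike Theorems 3 and 4). Your expansion via the singleton-conjunction convention $\bigwedge_{i=1}^{1} state_i \equiv state_1$ is just a more explicit rendering of the same one-line argument.
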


\begin{thm}
    An unconditional correctness property is obviously logically equivalent to the corresponding conditional one since it can be viewed as having a condition whose value is $True$.
\end{thm}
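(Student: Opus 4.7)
The plan is to prove this equivalence by reducing it to a basic propositional tautology lifted through the temporal operator \textbf{G}. First, I would fix notation by writing out the two LTL templates from Table \ref{tab:tspnp} in their unconditional versions. For the event-based template, the conditional form is $G(\land_{i=1}^n state_i \Rightarrow X(event))$, so the unconditional form corresponds to $G(X(event))$; similarly, the conditional state-based form $G(\land_{i=1}^n state_i \Rightarrow state)$ has the unconditional counterpart $G(state)$. The goal is then to show that each unconditional form coincides with the conditional form obtained by instantiating the pre-proposition $\land_{i=1}^n state_i$ as the constant $\mathit{True}$.

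Next, I would invoke the propositional tautology $(\mathit{True} \Rightarrow P) \equiv P$, valid in any state of the automaton $\mathcal{M}^{\phi}$. Because $\mathbf{G}$ is monotone under pointwise logical equivalence of its argument (if two state formulas are equivalent in every reachable state, then their \textbf{G}-closures are equivalent over every path), this pointwise equivalence lifts directly to
\begin{equation*}
G(\mathit{True} \Rightarrow X(event)) \equiv G(X(event)),
\end{equation*}
and analogously
\begin{equation*}
G(\mathit{True} \Rightarrow state) \equiv G(state).
\end{equation*}
Finally, I would observe that the empty conjunction $\land_{i=1}^{0} state_i$, or equivalently a conjunction in which every $state_i$ is taken to be $\mathit{True}$, evaluates to $\mathit{True}$ in every state, so the substitution that realizes the ``unconditional case as a special conditional case'' is well-defined.

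The argument has no real obstacle: both templates are closed under the same \textbf{G}-operator, and the only nontrivial work is verifying that the pointwise propositional equivalence can be moved under \textbf{G}, which is a standard soundness fact about LTL semantics. I would note this monotonicity explicitly (by appealing to the Kripke-style semantics of \textbf{G} over the state sequence induced by $\mathcal{M}^{\phi}$) to make the step rigorous, and then conclude that every conditional correctness property with the pre-proposition instantiated to $\mathit{True}$ collapses to the corresponding unconditional property, completing the equivalence in both directions.
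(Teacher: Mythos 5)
Your proposal is correct and follows essentially the same route as the paper, which justifies the theorem in one line by viewing the unconditional property as a conditional one whose pre-proposition is $True$; you simply make that instantiation explicit via the tautology $(True \Rightarrow P) \equiv P$ and its pointwise lifting under \textbf{G}. The added rigor is welcome but introduces no new idea beyond the paper's argument.
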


\begin{table}[t!]
\centering
\caption{Logically equivalent correctness property types.}
\label{tab:lqs}
\resizebox{\linewidth}{!}
{
\begin{tabular}{|p{2cm}|p{4cm}|p{4cm}|}
\hline
\begin{tabular}[c]{@{}c@{}}Summarised\\ property types\end{tabular} & Property types                      & Natural language templates                                                            \\ \hline
\multirow{3}{*}{Event-based}                                       & One-Event Unconditional            & {[}\textit{event}{]} should {[}\textit{never}{]} happen                                     \\ \cline{2-3}
                                                                   & Event-State Conditional (always)   & {[}\textit{event}{]} should {[}\textit{always}{]} happen when {[}\textit{$state_1$ ,..., $state_n$}{]}       \\ \cline{2-3}
                                                                   & Event-State Conditional (never)    & {[}\textit{event}{]} should {[}\textit{never}{]} happen when {[}\textit{$state_1$ ,..., $state_n$}{]}      \\ \hline
\multirow{6}{*}{State-based}                                       & One-State Unconditional (always)   & {[}\textit{state}{]} should {[}\textit{always}{]} be active                                 \\ \cline{2-3}
                                                                   & One-State Unconditional (never)    & {[}\textit{state}{]} should {[}\textit{never}{]} be active                                  \\ \cline{2-3}
                                                                   & Multi-State Unconditional (always) & {[}\textit{$state_1$, ..., $state_n$}{]} should {[}\textit{always}{]} occur together              \\ \cline{2-3}
                                                                   & Multi-State Unconditional (never)  & {[}\textit{$state_1$, ..., $state_n$}{]} should {[}\textit{never}{]} occur together               \\ \cline{2-3}
                                                                   & State-State Conditional (always)   & {[}\textit{state}{]} should {[}\textit{always}{]} be active while {[}\textit{$state_1$, ..., $state_n$}{]} \\ \cline{2-3}
                                                                   & State-State Conditional (never)    & {[}\textit{state}{]} should {[}\textit{never}{]} be active while {[}\textit{$state_1$, ..., $state_n$}{]}  \\ \hline
\end{tabular}
}
\end{table}

\begin{thm}
    The natural language template (``$[event]$ should $[always]$ happen if $[state_1, ..., state_n]$'') is logically equivalent to ``$[\neg event]$ should $[never]$ happen if $[state_1, ..., state_n]$''.
    \begin{equation}
        G(\land_{i=1}^n state_{i}\Rightarrow X(event)) \equiv \neg F(\land_{i=1}^n state_{i}\land X(\neg event))
\end{equation}
    where $\neg event$ has the constraint opposite to $event$, e.g., $\neg event$ of turning off the heater is turning it on.
\end{thm}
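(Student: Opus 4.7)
The plan is to prove the LTL equivalence by a sequence of standard tautological rewrites, taking the LHS and transforming it step by step into the RHS. First, I would rewrite the implication inside the $G$ operator using material implication, $p \Rightarrow q \equiv \neg p \lor q$, obtaining
\[
G\!\left(\neg\!\left(\textstyle\bigwedge_{i=1}^{n} state_{i}\right) \lor X(event)\right).
\]
Then I would apply the standard temporal duality $G(\phi) \equiv \neg F(\neg \phi)$ to flip the $G$ into an $F$ guarded by negation.

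Next, I would push the outer negation inside using De Morgan's law, converting the disjunction into a conjunction:
\[
\neg F\!\left(\textstyle\bigwedge_{i=1}^{n} state_{i} \;\land\; \neg X(event)\right).
\]
The key step that requires a small justification is the self-duality of the next-time operator, $\neg X(\phi) \equiv X(\neg \phi)$. I would note that under the standard LTL semantics over infinite traces adopted in the paper (consistent with the automaton model $\mathcal{M}^{\phi}$ defined in $\S$\ref{subsec:modeling}, whose transition relation $\Sigma$ always admits a successor by Eq. (\ref{eq6})), every state has a next state, so $\neg X(\phi)$ and $X(\neg \phi)$ coincide. Applying this equivalence yields
\[
\neg F\!\left(\textstyle\bigwedge_{i=1}^{n} state_{i} \;\land\; X(\neg event)\right),
\]
which is exactly the RHS as required.

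The main obstacle I anticipate is not the algebraic manipulation itself, which is routine, but rather the justification of the $X$ self-duality step: this relies on the totality of the transition relation in the underlying semantics, and if $\mathcal{M}^{\phi}$ admitted deadlock states, the equivalence would fail at terminal states. To close this gap cleanly, I would cite Eq. (\ref{eq6}), which guarantees that every state has at least a self-loop when no rule fires, ensuring that traces are infinite and the $X$ operator is always well-defined. With this observation, the chain of rewrites is fully sound, and since each step is an \emph{equivalence} rather than just an implication, the result holds in both directions without a separate converse argument.
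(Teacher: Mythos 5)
Your proof is correct and follows essentially the same route as the paper's: material implication, De Morgan, the self-duality of $X$, and the $G$/$F$ duality, merely applied in a trivially permuted order. Your additional remark justifying $\neg X(\phi) \equiv X(\neg\phi)$ via the totality of the transition relation in Eq.~(\ref{eq6}) is a sound refinement that the paper's proof takes for granted.
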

\begin{proof}
The LTL formula of the natural language template describing $[event]$ should $[always]$ happen if $[state_1, ..., state_n]$ is given by
    \begin{equation}
        G(\land_{i=1}^n state_{i}\Rightarrow X(event))
\end{equation}

Applying Law of Excluded Middle $    \psi \Rightarrow \chi \equiv \neg \psi \lor \chi$ to (10), we have that
\begin{equation}
            G(\land_{i=1}^n state_{i}\Rightarrow X(event)) \equiv             G(\neg(\land_{i=1}^n state_{i})\lor X(event))
\end{equation}

Applying De Morgan's Law $    \neg \psi \lor \neg \chi \equiv  \neg (\psi \land \chi)$ to (11), we have that
\begin{equation}
                G(\land_{i=1}^n state_{i}\Rightarrow X(event)) \equiv G(\neg(\land_{i=1}^n state_{i}\land \neg X(event)))
\end{equation}

Applying the negation propagation of X LTL logic $\neg X (\psi) \equiv  X(\neg \psi)$ to (12), we have that
\begin{equation}
                G(\land_{i=1}^n state_{i}\Rightarrow X(event)) \equiv G(\neg(\land_{i=1}^n state_{i}\land X(\neg event)))
\end{equation}

Applying the negation propagation of G LTL logic $G(\neg \psi) \equiv \neg F (\psi)$ to (13), we have that
\begin{equation}
                G(\land_{i=1}^n state_{i}\Rightarrow X(event)) \equiv \neg F(\land_{i=1}^n state_{i}\land X(\neg event))
\end{equation}

Theorem 3 is proved.
\end{proof}

\begin{thm}
    The natural language template (``$[state]$ should $[always]$ be active when $[state_1, ..., state_n]$'') is logically equivalent to ``$[\neg state]$ should $[never]$ be active when $[state_1, ..., state_n]$''.
    \begin{equation}
G(\land_{i=1}^n state_{i}\Rightarrow state) \equiv \neg F(\land_{i=1}^n state_{i}\land \neg state)
\end{equation}
     where $\neg state$ has attribute values except $state$, e.g., $\neg state$ of alarm.siren is off, strobe, or both.
\end{thm}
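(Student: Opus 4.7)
The plan is to mirror the proof of Theorem 3, which establishes the analogous equivalence for the event-based template, with the simplification that no nested \textbf{X} operator is involved here. I would chain three standard equivalences: the Law of Excluded Middle for implication, De Morgan's Law, and the LTL duality $G(\neg \psi) \equiv \neg F(\psi)$. The derivation is a pure rewrite sequence at the propositional/LTL level, so no semantic induction on traces is needed once the Boolean identities are in place.

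Concretely, I would first rewrite the implication inside $G$ via $\psi \Rightarrow \chi \equiv \neg \psi \lor \chi$, turning the left-hand side $G(\land_{i=1}^n state_{i} \Rightarrow state)$ into $G(\neg(\land_{i=1}^n state_{i}) \lor state)$. Next, I would collapse the disjunction using De Morgan's Law, $\neg \psi \lor \chi \equiv \neg(\psi \land \neg \chi)$, obtaining $G(\neg(\land_{i=1}^n state_{i} \land \neg state))$. Finally, pushing the negation outside $G$ via $G(\neg \psi) \equiv \neg F(\psi)$ produces the desired right-hand side $\neg F(\land_{i=1}^n state_{i} \land \neg state)$. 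Each step is a direct application of the same lemmas invoked in Theorem 3, so I would present them in parallel numbered equations for readability.

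The main subtlety is interpretational rather than technical: I must be explicit that $\neg state$ stands for the disjunction of all admissible attribute values other than $state$ (as the theorem statement clarifies with the alarm.siren example, where $\neg state$ of $off$ means $strobe \lor both$). Without this clarification, the final formula would look like a classical Boolean negation rather than a well-typed attribute assertion, and reviewers could object that the equivalence is only syntactic. I would therefore insert a short remark fixing the encoding of $\neg state$ before beginning the rewrite chain. Compared with Theorem 3, the absence of the \textbf{X} operator eliminates the negation-propagation step $\neg X(\psi) \equiv X(\neg \psi)$, so the derivation is strictly shorter and introduces no new temporal reasoning; the proof should be essentially routine once the attribute-level interpretation of negation is pinned down.
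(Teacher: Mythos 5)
Your proposal matches the paper's proof of this theorem essentially step for step: the same three rewrites (implication elimination, De Morgan, and the $G(\neg\psi)\equiv\neg F(\psi)$ duality), with the correct observation that the $\mathbf{X}$ negation-propagation step from the event-based case is not needed here. The added remark pinning down $\neg state$ as the disjunction of the remaining attribute values is a harmless clarification beyond what the paper writes, and the argument is correct.
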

\begin{proof}
The LTL formula of the natural language template describing $[state]$ should $[always]$ be active when $[state_1, ..., state_n]$ is given by
    \begin{equation}
        G(\land_{i=1}^n state_{i}\Rightarrow state)
\end{equation}

    Applying Law of Excluded Middle $    \psi \Rightarrow \chi \equiv \neg \psi \lor \chi$ to (17), we have that
\begin{equation}
            G(\land_{i=1}^n state_{i}\Rightarrow state) \equiv             G(\neg(\land_{i=1}^n state_{i})\lor state)
\end{equation}

Applying De Morgan's Law $    \neg \psi \lor \neg \chi \equiv  \neg (\psi \land \chi)$ to (16), we have that
\begin{equation}
                G(\land_{i=1}^n state_{i}\Rightarrow state) \equiv G(\neg(\land_{i=1}^n state_{i}\land \neg state))
\end{equation}

Applying the negation propagation of G LTL logic $G(\neg \psi) \equiv \neg F (\psi)$ to (18), we have that
\begin{equation}
                G(\land_{i=1}^n state_{i}\Rightarrow state) \equiv \neg F(\land_{i=1}^n state_{i}\land \neg state)
\end{equation}

Theorem 4 is proved.
\end{proof}

\section{Categorized and Prioritized Correctness Properties}

Scenario-based correctness property categorization ensures the \revise{safety} of different devices operating in the same automation scenario as described in Section \ref{marketapps}. To address property conflicts, TAPFixer develops prioritized correctness properties. It first sorts according to pre-proposition priority and then according to post-proposition priority. \revise{TAPFixer defines the pre-proposition priority based on automation scenarios as shown in Table  \ref{tab:sortingdescriptions1} and post-proposition priority based on device capabilities as shown in Table  \ref{tab:sortingdescriptions2}. It requires one-pass manual efforts and many of them are reusable across different scenarios since home device types and usage scenarios are limited.}
 Prioritized properties that share the same device capability are listed in ascending priority order in Table \ref{tab:sp}.
The grouping of scenario-based \emph{G1-G7} and prioritized \emph{G8-G23} are also shown in Table \ref{tab:sp}.

TAPFixer defines state-based properties to continuously prevent exceptions from occurring (e.g., P.10, P.34, P.44, P.53). The safety-sensitive properties do not specify a latency (e.g., P.10, P.29, P.31, P.42) because it is expected that these safety measures will remain effective until the risk is eliminated. For instance, the security camera in P.10 is expected to work at all times while the user is away, rather than only for a limited period. Whereas other properties can be designed to be satisfied periodically (e.g., P.34, P.44), allowing for permitted latencies to be specified. If there is a tardy attribute in the pre-position of a property, it takes a period for the satisfiability of the pre-position to change from true to false (e.g., \ce{CO2} drops below the defined threshold in P.34). We define that the latency longer than it is not permitted.
 Properties with latencies follow state-based property in TAPFixer for violation detection.
Take Fig. \ref{fig:rin}(a) and P.34 with a latency (“fan should be on for at least 10min”) for example, TAPFixer defines a variable ``fan.timer'' to record the remaining operating time of the fan (see $\S$\ref{subsec:modeling}) and translates P.34 into the LTL form: G(\ce{CO2} > a predefined value $\land$ (fan.timer $\geq$ fan.config\_latency-600) $\Rightarrow$fan.on), which follows the LTL template of the state-based property in Table \ref{tab:tspnp}.

\begin{table}[]
\caption{\revise{Sorting descriptions of the pre-proposition priority.}}
\label{tab:sortingdescriptions1}
\resizebox{\linewidth}{!}{
\begin{tabular}{|c|c|}
\hline
\begin{tabular}[c]{@{}c@{}}Scenarios in the pre-proposition \\ of the correctness property\end{tabular} &
  Pre-proposition priority \\ \hline
General &
  \begin{tabular}[c]{@{}c@{}}user.not\_present \textgreater user.present,\\ smoke.detected = CO.detected \textgreater weather.raining \textgreater\\ \ce{CO2}-related = humidity-related\end{tabular} \\ \hline
Temperature-related &
  \begin{tabular}[c]{@{}c@{}}user.not\_present \textgreater heater.on = AC.on \textgreater the temperature \\ is below / rises above a predefined value\end{tabular} \\ \hline
\end{tabular}}
\end{table}

\begin{table}[]
    \centering
    \caption{\revise{Sorting descriptions of the post-proposition priority.}}
    \label{tab:sortingdescriptions2}
    \resizebox{\linewidth}{!}{
    \begin{tabular}{|c|c|}
    \hline
        Device Capabilities & Post-proposition Priority \\ \hline
        light.switch & light.on = light.off \\ \hline
        door.lock & door.lock > door.unlock \\ \hline
        security\_camera.switch & camera.on > camera.off \\ \hline
        switch.switch & switch.off > switch.on \\ \hline
        AC.mode & AC.heating\_mode = AC.cooling\_mode \\ \hline
        heater.switch & heater.off > heater.on \\ \hline
        coffee\_machine.switch & coffee\_machine.off > coffee\_machine.on \\ \hline
        electric\_blanket.switch & electric\_blanket.off > electric\_blanket.on \\ \hline
        alarm.state & alarm.activated > alarm.unactivated \\ \hline
        ventilating\_fan.switch & ventilating\_fan.off > ventilating\_fan.on \\ \hline
        oven.switch & oven.off > oven.on \\ \hline
        gas\_water\_heater.switch & gas\_water\_heater.off > gas\_water\_heater.on \\ \hline
        gas\_valve.switch & gas\_valve.off > gas\_valve.on \\ \hline
        water\_valve.switch & water\_valve.on > gas\_valve.off \\ \hline
        sprinkler.switch & sprinkler.on = sprinkler.off \\ \hline
        window.switch & window.open = window.close \\ \hline
    \end{tabular}}
\end{table}

\renewcommand{\arraystretch}{1.0}
\begin{table}[]
\caption{Categorized and prioritized correctness properties.} 
\label{tab:sp}
\centering
\resizebox{\columnwidth}{!}
{
\begin{tabular}{|c|p{8cm}|C{2cm}|C{2cm}|}
\hline
\multirow{3}{*}{Property} & \multirow{3}{*}{Description}  &  Scenario-based category types & Priority-based category types\\ \hline
P.1 &   IF the user arrives home, the light should be turned on. & G1 &G8\\ \hline
P.2 &   IF the user is not at home / not nearby-home, the light should be turned off.& G2&G8\\ \hline
P.3 &   WHEN the user is not at home / not nearby-home, the light should be off.&G2 &G8\\ \hline
P.4 &   IF the user arrives home, the garage door should be opened.& G1 & G9\\ \hline
P.5 &   IF the user leaves home, the garage door should be closed.& G2& G9\\ \hline
P.6 &   WHEN the user leaves home, the garage door should be closed.& G2& G9\\ \hline
P.7 &   IF the user is not at home / not nearby-home, the door should be locked.& G2& G9\\ \hline
P.8 &   WHEN the user is not at home / not nearby-home, the door should be locked.& G2& G9\\ \hline
P.9 &   IF the user is not at home / not nearby-home, the security camera should be turned on.& G2&G10\\ \hline
P.10 &   WHEN the user is not at home / not nearby-home, the security camera should be on.& G2&G10\\ \hline
P.11 &   IF the door opens while the user is not at home / not nearby-home, the security camera should take pictures.&G2 &G10\\ \hline
P.12 &   IF the user is not at home / not nearby-home, the switch should be turned off.&G2 &G11\\ \hline
P.13 &   WHEN the user is not at home / not nearby-home, the switch should be off.& G2&G11\\ \hline
P.14 &   IF the temperature is below a predefined value and someone is at home, the AC should be in heating mode.& G3&G12\\ \hline
P.15 &   IF the temperature rises above a predefined value, the AC should be in cooling mode.&G3 &G12\\ \hline
P.16 &   WHEN the heater is on, the AC should be off.& G3 &G12\\ \hline
P.17 &  IF the user is not at home / not nearby-home, the AC should be turned off.& G2&G12\\ \hline
P.18 &   WHEN the user is not at home / not nearby-home, the AC should be off.& G2&G12\\ \hline
P.19 &   IF the temperature is below a predefined value while someone is at home, the heater should be turned on.&G3 &G13\\ \hline
P.20 &   IF the temperature rises above a predefined value, the heater should be turned off.&G3 &G13\\ \hline
P.21 &   WHEN the AC is on, the heater should be off.& G3&G13\\ \hline
P.22 &   IF the user is not at home / not nearby-home, the heater should be turned off.&G2 &G13\\ \hline
P.23 &   WHEN the user is not at home / not nearby-home, the heater should be off.&G2 &G13\\ \hline
P.24 &   IF the user is not at home / not nearby-home, the coffee machine should be turned off.&G2 &G14 \\ \hline
P.25 &   WHEN the user is not at home / not nearby-home, the coffee machine should be off.&G2 & G14\\ \hline
P.26 &   IF the user is not at home / not nearby-home, the electric blanket should be turned off.&G2 & G15\\ \hline
P.27 &   WHEN the user is not at home / not nearby-home, the electric blanket should be off. &G2 &G15\\ \hline
P.28 &   IF the smoke is detected, the alarm should be activated.& G4&G16\\ \hline
P.29 &   WHEN there is smoke, the alarm should be activated.& G4&G16\\ \hline
P.30 &   IF CO is detected, the alarm should be activated.&G5 &G16\\ \hline
P.31 &   WHEN CO is detected, the alarm should be activated. &G5 &G16\\ \hline
P.32 &   IF humidity is greater than a predefined value, the ventilating fan should be turned on.& G6 &G17\\ \hline
P.33 &   IF \ce{CO2} is greater than a predefined value, the ventilating fan should be turned on.& G4&G17\\ \hline
P.34 &   WHEN \ce{CO2} remains greater than a predefined value, the ventilating fan should be on for at least the permitted time. &G4 &G17\\ \hline
P.35 &   IF the user is not at home / not nearby-home, the oven should be turned off. & G2&G18\\ \hline
P.36 &   WHEN the user is not at home / not nearby-home, the oven should be off. & G2&G18\\ \hline
P.37 &   IF CO is detected, the natural gas hot water heater should be turned off. &G5 & G19\\ \hline
P.38 &   WHEN CO is detected, the natural gas hot water heater should be off. &G5 &G19 \\ \hline
P.39 &   IF CO is detected, the gas valve should shut off.&G5 & G20\\ \hline
P.40 &   WHEN CO is detected, the gas valve should be closed.&G5 & G20 \\ \hline
P.41 &   IF the smoke is detected, the water valve should be turned on.&G4 &G20\\ \hline
P.42 &   WHEN there is smoke, the water valve should be on.& G4&G20\\ \hline
P.43 &   IF the soil moisture sensor is below a predefined value, the sprinkler system should be turned on.& G6&G21\\ \hline
P.44 &   When the soil moisture sensor is below a predefined value, the sprinkler system should be on for at least the permitted time.& G6&G21\\ \hline
P.45 &   IF the weather is raining, the sprinkler should be turned off.& G7&G21\\ \hline
P.46 &   WHEN the weather is raining, the sprinkler should be off.&G7 &G21\\ \hline
P.47 &   IF \ce{CO2} is greater than a predefined value, the window should be opened.& G5&G22\\ \hline
P.48 &   IF the weather is raining, the window should be closed.&G7 &G22\\ \hline
P.49 &   WHEN the weather is raining, the window should be closed.&G7 &G22\\ \hline
P.50 &   IF the smoke is detected, the window should be opened.& G4&G23\\ \hline
P.51 &   WHEN there is smoke, the window should be opened.& G4&G23\\ \hline
P.52 &   IF CO is detected, the window should be opened. &G5 &G23\\ \hline
P.53 &   WHEN CO is detected, the window should be opened. &G5 &G23\\ \hline
\end{tabular}}
\end{table}

\end{document}